\theoremstyle{plain}
\newtheorem*{theorem*}{Theorem}
\newtheorem{theorem}{Theorem}
\newtheorem*{remark*}{Remark}
\newtheorem{remark}{Remark}
\newtheorem*{corollary*}{Corollary}
\newtheorem{corollary}{Corollary}
\newtheorem{lemma}{Lemma}
\tikzstyle{process} = [rectangle, rounded corners, minimum width=3cm, minimum height=1cm, text centered, draw=black, align= center]
\tikzstyle{virtualprocess} = [rectangle, rounded corners, minimum width=3cm, minimum height=1cm, text centered, draw=black, dashed, align=center]
\tikzstyle{emptyprocess} = [rectangle, minimum width=3cm, minimum height=1cm, text centered, align=center]
\tikzstyle{point} = [circle, inner sep=0pt, minimum size=3pt, fill=black]
\tikzstyle{arrow} = [thick, ->, >=stealth]
\tikzstyle{dashedarrow} = [thick, ->, >=stealth, dashed]
\tikzstyle{connect} = [ thick,<->, >=stealth]
\tikzstyle{channel} = [rectangle, line width=0.7mm, minimum width=5cm, minimum height=3cm, text centered, draw=black, align = center]
\tikzset{
	pics/detector/.style args={#1,#2,#3}{
		code={
			\draw[line width=0.7mm] (0,-2) -- (0,2); 
			\draw[line width=0.7mm] (0,2) -- ++(1,0); 
			\draw[line width=0.7mm] (0,-2) -- ++(1,0);
			\draw[line width=0.7mm] (1,2) arc (90:-90:2); 
			\node[#3, align=center] (#1) at (1.3,0) {\scalebox{0.9}{\huge #2}}; 
		}
	}
}
\newcommand{\deltaone}{\delta_1}
\newcommand{\deltatwo}{\delta_2}
\newcommand{\epsATb}{\varepsilon_\mathrm{dep-1}}
\newcommand{\epssrc}{\varepsilon_\mathrm{ind}}
\newcommand{\epspe}{\varepsilon_\mathrm{PE}}
\newcommand{\epssp}{\varepsilon_\mathrm{sp}}
\newcommand{\epspa}{\varepsilon_\mathrm{PA}}
\newcommand{\epsev}{\varepsilon_\mathrm{EV}}
\newcommand{\epssec}{\varepsilon_\mathrm{sec}}
\newcommand{\epscorr}{\varepsilon_\mathrm{corr}}
\newcommand{\epsATc}
{\varepsilon_\mathrm{dep-2}}
\newcommand{\Sindep}
{S_{0,0}}
\newcommand{\Sdep}{S_{\delta_1,\delta_2}}
\newcommand{\vecnX}{\vec n_\xpovm}
\newcommand{\Prindep}{\Pr_{\Sindep}}
\newcommand{\Prdep}{\Pr_{\Sdep}}
\newcommand{\Prboth}{\Pr_{S}}
\newcommand{\Findep}{\mathcal{F}_{0,0}}
\newcommand{\Eindep}{\mathcal{E}_{0,0}}
\newcommand{\Edep}{\mathcal{E}_{\delta_1,\delta_2}}
\newcommand{\Eboth}{\mathcal{E}}
\newcommand{\Eindepdecoy}{\Eindep^{\mathrm{decoy}}}
\newcommand{\Findepdecoy}{\Findep^{\mathrm{decoy}}}
\newcommand{\etadet}{\eta_{\text{det} }}
\newcommand{\dcprob}{d_{\text{det} }}
\newcommand{\etachar}{\Delta_{\eta}}
\newcommand{\dcchar}{\Delta_{\mathrm{dc}}}
\newcommand{\nph}{n_\mathrm{ph}}
\newcommand{\eph}{e_\mathrm{ph}}
\newcommand{\zpovm}{\mathcal{Z}}
\newcommand{\xpovm}{\mathcal{X}}
\newcommand{\lambdaec}{\lambda_\mathrm{EC}}
\newcommand{\affvqcc}{Vigo Quantum Communication Center, University of Vigo, Vigo E-{36310}, Spain}
\newcommand{\affuvigo}{Escuela de Ingeniería de Telecomunicación, Department of Signal Theory and Communications, University of Vigo, Vigo E-36310, Spain}
\newcommand{\affatlantic}{atlanTTic Research Center, University of Vigo, Vigo E-36310, Spain}
\newcommand{\affwaterloo}{Institute for Quantum Computing and Department of Physics and Astronomy, University of Waterloo, Waterloo, Ontario, Canada, N2L 3G1}
\begin{document}

	\setlength{\parskip}{3pt}
	\setlength{\parindent}{0pt}
	
	\author{Guillermo Currás-Lorenzo} 
	\affiliation{\affvqcc} \affiliation{\affuvigo} \affiliation{\affatlantic} 
	\author{Margarida Pereira}
	\affiliation{\affvqcc} \affiliation{\affuvigo} \affiliation{\affatlantic} 
	\author{Shlok Nahar}
	\affiliation{\affwaterloo} 
	\author{Devashish Tupkary}
	\affiliation{\affwaterloo}  
	
	\title{Security of quantum key distribution with source and detector imperfections through phase-error estimation}
	
	\setlength{\parskip}{3pt}
	\setlength{\parindent}{0pt}

	\begin{abstract}
		\textit{Abstract.}---Quantum key distribution (QKD) promises information-theoretic security based on quantum mechanics, but practical implementations face security vulnerabilities due to device imperfections. Recent advances have separately addressed source and detector imperfections in phase-error-estimation based security proof frameworks, but this is not enough to protect real-world QKD systems suffering from both types of imperfections simultaneously. In this work, we show that existing techniques for BB84-type protocols can be combined to construct a unified security proof that simultaneously accounts for both source and detector imperfections. 
		Our approach thus represents a significant step towards closing the gap between theoretical security proofs and practical QKD implementations.
	\end{abstract}
	
	\maketitle

	\textit{Introduction.}---Quantum key distribution (QKD) promises unconditional security guaranteed by the fundamental laws of quantum mechanics~\cite{bennettQuantumCryptography1984}. However, practical QKD implementations suffer from device imperfections that deviate from the idealized theoretical models typically assumed in security analyses, invalidating their guarantees~\cite{xuSecureQuantum2020,ImplementationAttacksBSI}. This theory-practice gap represents one of QKD's most critical challenges, prompting major security agencies to express reservations about QKD adoption~\cite{NSA,NSA_europea}.
	
	This has led to growing interest in developing security proofs that incorporate such imperfections  \cite{gottesmanSecurityQuantum2004,fungSecurityProof2009,maroy2010security,lydersenSecurityQuantum2010,trushechkinSecurity2022,bochkovSecurity2019,zapateroSecurityQuantum2021,pereiraModifiedBB842023,tamakiLosstolerantQuantum2014,tupkaryPhaseError2024,grasselliQuantumKey2024,marcominiLosstolerantQuantum2024,sixtoQuantumKey2025,curras-lorenzoNumericalSecurity2025,curras-lorenzoSecurityHighspeed2025,arqandMutualInformation2024,marwahProvingSecurity2024,naharImperfectDetectors2025,kaminRenyiSecurity2025}, most of which are based on phase-error estimation \cite{gottesmanSecurityQuantum2004,fungSecurityProof2009,maroy2010security,lydersenSecurityQuantum2010,trushechkinSecurity2022,bochkovSecurity2019,zapateroSecurityQuantum2021,pereiraModifiedBB842023,tamakiLosstolerantQuantum2014,tupkaryPhaseError2024,grasselliQuantumKey2024,marcominiLosstolerantQuantum2024,sixtoQuantumKey2025,curras-lorenzoNumericalSecurity2025,curras-lorenzoSecurityHighspeed2025}. 
	In particular, for source imperfections, Ref.~\cite{curras-lorenzoSecurityHighspeed2025} provides a finite-key security proof against general attacks that can incorporate general encoding flaws and side-channels while only requiring partial state characterization, thus providing robustness against imperfections that are difficult to characterize precisely. Similarly, for detector imperfections, Ref.~\cite{tupkaryPhaseError2024} provides a finite-key security proof against general attacks that accounts for bounded detector imperfections, including detection efficiency mismatches.
	
	Despite these advances, existing phase-error-estimation-based security proofs tend to address source and detector imperfections separately---a significant limitation given that practical QKD systems inevitably suffer from both types of imperfections simultaneously. Typically, current proofs either assume ideal sources when treating detector imperfections, e.g., \cite{tupkaryPhaseError2024}, or assume basis-independent detection efficiency when addressing source flaws, e.g., \cite{curras-lorenzoSecurityHighspeed2025}. Existing works aimed at combining the two either require strong assumptions (e.g., assuming Bob receives a qubit \cite{marcominiLosstolerantQuantum2024}), or are valid only in the asymptotic regime \cite{marcominiLosstolerantQuantum2024,sunSecurityQuantum2021}. Outside phase-error estimation-based proofs, there is limited work \cite{naharImperfectDetectors2025} addressing detector imperfections \footnote{The approach in \cite{naharImperfectDetectors2025} requires bounding the weight outside the preserved subspace for the flag-state squasher \cite{zhangSecurityProof2021}, which remains an open problem when the detection efficiency is not perfectly characterized. While this technique shows promise and could potentially be combined with analyses that can incorporate source imperfections \cite{kaminRenyiSecurity2025}, this technical gap prevents it from providing a complete security proof.}.

	Here, we bridge this critical gap by demonstrating that existing phase-error-estimation techniques addressing source and detector imperfections can be combined in a modular fashion to construct a unified finite-key security proof against general coherent attacks. Specifically, we show that any phase-error-rate bound for BB84-type protocols assuming basis-independent detection efficiency---including those already incorporating source imperfections~\cite{curras-lorenzoSecurityHighspeed2025}---can be extended to also account for detection efficiency mismatches in Bob's measurement setup, in a similar manner and under the same conditions (active basis choice, bounded detector imperfections) as in Ref.~\cite{tupkaryPhaseError2024}. Our work thus provides a practical approach to build a comprehensive finite-key security analysis for BB84-type protocols that incorporates both source and detector imperfections simultaneously. To our knowledge, this is currently the \textit{only} work capable of achieving this goal.  
	
	\textit{Phase-error estimation for generalized BB84-type scenarios.}---In an ideal BB84 protocol, for each round, Alice chooses a random bit and basis, and emits a perfect qubit Pauli eigenstate corresponding to her choice. Ref.~\cite{tupkaryPhaseError2024} shows how to extend the standard phase-error-based security proof for this scenario to cover detection efficiency mismatches. In practice, however, due to source imperfections, the states emitted by Alice will not have this ideal form. Here, we describe a class of generalized BB84-type scenarios in which Alice's source may be imperfect, and specify how a security proof based on phase-error estimation can be formulated for such scenarios under the assumption that Bob's measurement setup satisfies the basis-independent detection efficiency condition. Then, we explain how our result can be directly applied to extend any such security proof to cover detection efficiency mismatches on Bob's setup. For concreteness, in the main text, we focus on single-photon BB84-type protocols; for the application of our techniques to decoy-state BB84-type protocols, see End Matter.
	
	In a generalized BB84-type scenario, Alice selects a sequence of setting choices $j_1...j_N$ with probability $p_{j_1...j_N}$, prepares some global state $\ket{\psi_{j_1...j_N}}_{a_1 ... a_N}$, and sends it to Bob through the quantum channel. Note that this model covers scenarios in which Alice's global state may be arbitrarily correlated between rounds~\footnote{Often, security proofs assume that both Alice's setting choices and the states she emits are IID, i.e., $p_{j_1...j_N} = p_{j_1}...p_{j_N}$ and $\ket{\psi_{j_1...j_N}}_{a_1 ... a_N} = \ket{\psi_{j_1}}_{a_1} ...\ket{\psi_{j_N}}_{a_N}$; however, our result is general and could be applied to proofs that do not require such IID assumptions on Alice's state preparation, i.e., given a phase-error-estimation strategy for non-IID imperfect sources and perfect detectors (Eq. (2)), our result can upgrade it to a result for non-IID imperfect sources and imperfect detectors. Even more generally, the global state $\rho_{j_1...j_N}$ emitted by Alice could be mixed. In this case, one can simply consider that Alice prepares a purification $\ket{\psi_{j_1...j_N}}_{S_1...S_N a_1 ... a_N}$ of $\rho_{j_1...j_N}$, where the ancillary shield systems $S_1 ... S_N$ are not accessible to Eve.}. Typically, $j_k \in \{0,1,2,3\}$, but Alice can more generally emit a different number of states, see e.g., the three-state protocol \cite{boileauUnconditionalSecurity2005,tamakiLosstolerantQuantum2014}. 
	As for Bob, for each round, he probabilistically chooses either the $\zpovm \coloneqq \{\Gamma_0^{(\zpovm)},\Gamma_1^{(\zpovm)},\Gamma_\bot^{(\zpovm)}\}$ or $\xpovm \coloneqq \{\Gamma_0^{(\xpovm)},\Gamma_1^{(\xpovm)},\Gamma_\bot^{(\xpovm)}\}$ positive operator-valued measure (POVM) and performs the selected measurement. Then, he announces which rounds were detected, and his basis choice for each detected round. The basis-independent detection efficiency assumption requires that the POVM element corresponding to discarded rounds is the same for both POVMs, i.e, $ \Gamma_\bot^{(\zpovm)} = \Gamma_\bot^{(\xpovm)}$. This only holds when the efficiency and dark count rates of all of Bob's detectors are identical \footnote{Note that in an active polarization-encoded BB84 detector setup, Bob's actual POVMs actually have four elements: no click, click in detector 0, click in detector 1, and double-click event. To define a three-outcome POVM, double-click events must be assigned to a bit value. The basis-independent detection efficiency condition requires not only identical detectors but also random bit assignment for double-click events. Other implementations would require some similar post-processing to a three-outcome POVM.}. Under this assumption, Bob's measurement admits the following equivalent description (in the sense of being the same quantum to classical map): (1) apply a basis-independent filter $\{\tilde F, \mathbb{I} - \tilde F\}$ that determines whether or not he obtains a detection, (2) for the rounds that pass the filter, choose a measurement basis $\beta \in \{\zpovm,\xpovm\}$, and (3) apply a two outcome POVM $\{G_0^{(\beta)},G_1^{(\beta)}\}$ that determines his bit value. For more information about this equivalent description, see \cref{subapp:BIDE}.

	After Bob's measurements, Alice announces certain sifting information that allows the users to determine which detected rounds will be used for key generation. For concreteness, we assume here that the sifted key is extracted from the rounds in which Alice's setting choice satisfies $j \in \{0,1\}$ (typically referred to as $Z$-basis emissions) and Bob chooses $\zpovm$.
	
	To prove security, we consider a scenario in which Alice generates a source-replacement state
	\begin{equation}
		\label{eq:source_replacement_state}
		\ket{\Psi} = \sum_{j_1...j_N} \sqrt{p_{j_1...j_N}} \ket{j_1 ... j_N}_{A_1  ... A_N} \ket{\psi_{j_1  ... j_N}}_{a_1  ... a_N}.
	\end{equation}
	Clearly, if Alice measures her ancilla systems $A_1...A_N$ in the computational basis, this scenario is equivalent to the actual protocol. We can also consider an equivalent scenario in which Alice and Bob first learn which rounds will be used for key generation, and only afterwards learn their bit values. For this,  Bob applies the filter $\{\tilde F, \mathbb{I} - \tilde F\}$ to all rounds, and then he makes his basis choice $\beta$ for the rounds that pass the filter (i.e., the detected rounds) and announces this information, but does not learn his bit value yet. Meanwhile, for the detected rounds in which Bob announces $\zpovm$, Alice attempts a projection onto the subspace spanned by $\{\ket{0}_{A},\ket{1}_{A}\}$; if the projection is successful, then the users learn this is a key generation round. 
	
	The security of the key can then be related to the number of \textit{phase} errors that Alice and Bob would obtain if, at this point, for each key generation round, Alice measured in the $\{\ket{+}_A,\ket{-}_A\}$ basis (rather than the  $\{\ket{0}_A,\ket{1}_A\}$ basis) and Bob measured $\{G_0^{(\xpovm)},G_1^{(\xpovm)}\}$ (rather than $\{G_0^{(\zpovm)},G_1^{(\zpovm)}\}$). In particular, a security analysis based on phase-error estimation requires proving a statement of the form
	\begin{equation}
		\label{eq2:guarantee_BIDE}
		\Prindep(\bm{e_\mathbf{ph}} > 	\Eindep(\bm{\vec n_\xpovm},\bm{n_K})) \leq \epssrc^2,
	\end{equation}    
	which can then be directly used to determine the length of the final output key and its security parameter, see \cref{eq:key_length} below. Here, $\bm{e_\mathrm{ph}}\coloneqq \bm{n_\mathrm{ph}}/\bm{n_K}$, where $\bm{n_\mathrm{ph}}$ is the random variable (RV) associated to the number of phase errors in the scenario described above, $\bm{n_K}$ is the RV associated to the number of key-generation rounds, $\bm{\vec n_{\xpovm}}$ is the random vector containing all the announced statistics from rounds in which Bob announces $\xpovm$, $\Eindep$ is a function relating these quantities, and $\epssrc^2$ is the failure probability of the bound. Also, the subscript $\Sindep$ under $\Pr$ and the subscript 0,0 under $\Eboth$ indicate that the bound holds for a phase-error estimation protocol in which Bob's measurement satisfies the basis-independent detection efficiency condition. Note that throughout the manuscript, we use bold letters to denote classical RVs.

	\textit{Our result.}---In this work, we show that, given a statement of the form in \cref{eq2:guarantee_BIDE}, one can extend it to the case of detection efficiency mismatch, obtaining a statement of the form
	\begin{equation}
		\label{eq2:objective}
		\Prdep\left(\bm{e_\mathbf{ph}} > 	\Edep\big(\bm{\vec n_\xpovm}, \bm{n_K}\big)\right) \leq \epssrc^2+\epsATb^2+\epsATc^2.
	\end{equation}
	Here, the subscript $\Sdep$ under $\Pr$ indicates that the bound holds for a phase-error estimation protocol in which Bob's measurement setup suffers from a detection efficiency mismatch parameterized by $\delta_1$ and $\delta_2$. These parameters are introduced in Ref.~\cite{tupkaryPhaseError2024} and quantify how far off Bob's measurement setup is from satisfying the basis-independent detection efficiency assumption. For the general definition of these parameters, and of the phase-error estimation protocol in the case of detector efficiency mismatch, see \cref{subapp:BDDE}.
	
	In the general case, the extended bound is defined as
	\begin{equation}
		\begin{aligned}
			\label{eq:Edep1}
			\Edep(\bm{\vec n_\xpovm},\bm{n_{K}}) &= \frac{\max_{n \in \mathcal{W}_{\delta_2}(\bm{n_K})}  n \, \Eindep (\bm\vecnX,n)}{\bm{n_K}}  \\&+ \frac{ \delta_1 + \gamma^{\epsATb}_{\mathrm{bin}}(\bm{n_K}, \deltaone)}{1-\delta_2-\gamma^{\epsATc}_{\mathrm{bin}}(\bm{n_K}, \delta_2)},
		\end{aligned}
	\end{equation}
	where
	\begin{equation}
		\mathcal{W}_{\delta_2}(m) \coloneqq \mathcal{N}_{\leq} \left(\left\lfloor\frac{m}{1-\delta_2-\gamma^{\epsATc}_{\mathrm{bin}}(m, \delta_2)} \right \rfloor\right)
	\end{equation}
	with $\mathcal{N}_{\leq}(m) \coloneqq \{0,1,...,m\}$ denoting the set of non-negative integers up to $m$; and $\gamma^{\varepsilon}_{\mathrm{bin}}$ is a function introducing finite-size deviation terms that approach zero as the number of key rounds $\bm{n_K}$ approaches infinity. 
	
	While this formula requires an optimization that may be computationally intensive, it can be simplified in most cases. In particular, if the function
	\begin{equation}
		\label{eq:Findepdef}
		\Findep\big(\bm{\vec n_\xpovm}, \bm{n_{K}}\big) \coloneqq \bm{n_K} \Eindep\big(\bm{\vec n_\xpovm}, \bm{n_{K}}\big), 
	\end{equation}
	is non-decreasing with respect to $\bm{n_K}$---which typically holds since the concentration inequalities used in QKD security proofs typically converge sublinearly---then the optimization term $\max_{n \in \mathcal{W}_{\delta_2}(\bm{n_K})}  n \, \Eindep (\bm\vecnX,n)$ in \cref{eq:Edep1} can be simply replaced by $  \bm{n^*} \, \Eindep (\bm\vecnX,\bm{n^*})$, where 
	$$\bm{n^*}= \left \lfloor\frac{\bm{n_K}}{1-\delta_2-\gamma^{\epsATc}_{\mathrm{bin}}(\bm{n_K}, \delta_2)}\right \rfloor.$$
	
	Moreover, if $\Eindep$ is also non-increasing with respect to $\bm{n_K}$---as is typical for most security proofs, since concentration inequalities become tighter as the sample size increases---then we can obtain the even simpler expression
	\begin{equation}
		\begin{aligned}
			\label{eq:Edep2}
			&\Edep(\bm{\vec n_\xpovm},\bm{n_{K}})     \\
			&=\frac{\mathcal{E}_{0,0}\big(\bm{\vec n_\xpovm}, \bm{n_{K}}\big) + \deltaone + \gamma^{\epsATb}_{\mathrm{bin}}(\bm{n_{K}}, \deltaone)}{1-\delta_2-\gamma^{\epsATc}_{\mathrm{bin}}(\bm{n_{K}}, \delta_2)}.   
		\end{aligned}
	\end{equation}
	For the proof of the statements in this subsection, see \cref{thm:maintheorem,maincorollary} in \cref{app:proof_main_result}.

	More informally, our result implies that, asymptotically,
	\begin{equation}
		\eph^{(\Sdep)} \lesssim \frac{\eph^{(\Sindep) } + \delta_1}{1-\delta_2},
	\end{equation}
	where $\eph^{(\Sindep)}$ denotes the phase-error rate in the case of basis-independent detection efficiency (or an upper bound on it), and $\eph^{(\Sdep)}$ denotes the phase-error rate in the case of a detection efficiency mismatch parameterized by $\delta_1$ and $\delta_2$.

	\textit{Secret-key length.}---Given a general phase-error-rate bound of the form in \cref{eq2:guarantee_BIDE} or \cref{eq2:objective}, i.e.,
	\begin{equation}
		\label{eq:eph_bound_both}
		\Prboth(\bm{e_\mathbf{ph}} > 	\mathcal{E}(\bm{\vec n_\xpovm},\bm{n_K})) \leq \epspe^2,
	\end{equation}
	security can be established using either entropic uncertainty relations (EUR) and the leftover hashing lemma (LHL) \cite{tomamichelUncertaintyRelation2011,tomamichelTightFinitekey2012,tomamichelLargelySelfcontained2017} or phase-error correction arguments \cite{koashiSimpleSecurity2009}. As demonstrated in \cite{tupkaryPhaseError2024,curras-lorenzoTightFinitekey2021} (see also \cite{hayashiConciseTight2012,kawakamiSecurity}), a bound of this form directly yields finite-key security against general attacks in the variable-length framework \cite{ben-orUniversalComposable2005}, where the output key length depends on observations during protocol execution. Specifically, under the EUR+LHL framework, one can achieve a secure key length \cite[Theorem 1]{tupkaryPhaseError2024}
	\begin{equation}
		\label{eq:key_length}
		\begin{gathered}
			\bm{l} = \bm{n_K} (1-h(\mathcal{E}(\bm{\vec n_\xpovm},\bm{n_K})) - \bm\lambdaec \\- 2 \log(1/2 \epspa) - \log(2/\epsev)),
		\end{gathered}
	\end{equation}
	that is $(\epscorr+\epssec)$-secure, with $\epscorr = \epsev$ and $\epssec = 2 \epspe + \epspa$. Here, $\epsev$ is the failure probability of the error verification step, $\epspe$ is the square root of the failure probability of the phase-error rate bound (see \cref{eq:eph_bound_both}), $\epspa>0$ can be freely chosen, and $\bm\lambdaec$ denotes the number of bits leaked during error-correction, which must be a function of the announced protocol observations.
	
	\textit{Numerical results.}---To demonstrate the utility of our result, we apply it to the security proof in Ref.~\cite{curras-lorenzoSecurityHighspeed2025}, which is based on phase-error estimation. This analysis is valid in the finite-key regime and against general attacks, and can incorporate imperfect and partially characterized sources. Specifically, it only requires the knowledge that the state $\rho_j^{(k)}$ emitted by Alice in any round $k$ when she selects encoding $j$, satisfies
	\begin{equation}
		\ev{\rho_j^{(k)}}{\phi_j} \geq 1-\epsilon,
	\end{equation}
	where $\ket{\phi_j}$ is a known reference state and $\epsilon\geq 0$ bounds its deviation to $\rho_j^{(k)}$. For concreteness, we consider a BB84-type scenario in which Alice's setting choices are $j\in \{0_Z,1_Z,0_X,1_X\}$, and the reference states are qubit states of the form
	\begin{equation}		\label{eq:qubit_state_model}
		\ket{\phi_{j}} =	\cos(\theta_j)\ket{0_Z}+\sin(\theta_j)\ket{1_Z},
	\end{equation}
	where {$\theta_j = (1+\delta_{\rm SPF}/\pi) \varphi_j/2$,} $\varphi_j\in\{0,\pi,\pi/2,3\pi/2\}$ for $j\in\{0_Z,1_Z,0_X,1_X\}$, and $\delta_{\rm SPF} \in [0,\pi)$. Note that, in this model, $\delta_{\rm SPF}$  quantifies the magnitude of the characterized qubit state preparation flaws, while $\epsilon$ can incorporate other imperfections that are difficult to characterize, such as leakage of bit and basis information through additional degrees of freedom (see \cite{curras-lorenzoSecurityHighspeed2025} for more information).
	
	Our result can extend the phase-error-rate bound in Ref.~\cite[Eq.~(S116)]{curras-lorenzoSecurityHighspeed2025} to incorporate also detection efficiency mismatches, obtaining a security proof that is robust to both source and detector imperfections. As in Ref.~\cite[Section VI]{tupkaryPhaseError2024}, we consider a canonical model for Bob's detectors and assume that their detector efficiencies and dark count rates are not known precisely, but are only characterized to some known tolerances $\etachar$ and $\dcchar$, respectively, i.e.,
	\begin{equation}\label{eq:etadcmodel}
		\begin{aligned}
			\eta_{\beta_b} &\in [\etadet(1-\etachar ), \etadet(1+\etachar )], \\
			d_{\beta_b} &\in [\dcprob(1-\dcchar ), \dcprob(1+\dcchar )],
		\end{aligned}
	\end{equation} 
	where $\eta_{\beta_b}$ ($d_{\beta_b}$) is the detection efficiency (dark count rate) of the detector associated to basis $\beta$ and bit value $b$. For this particular model, the parameters $\delta_1$ and $\delta_2$ can be upper bounded, respectively, as \cite{tupkaryPhaseError2024}
	\begin{align}
		\delta_1 \leq \max \Bigg\{& \Bigg(1- \frac{1-(1-d_{\rm min})^2}{1-(1-d_{\rm max})^2}\Bigg) \frac{d_{\rm max}(2-d_{\rm min})}{1-(1-d_{\rm min})^2}, \nonumber \\
		& 4 \Bigg|1 - \sqrt{1-(1-d_{\rm min})^2(1-r_{\eta})}\Bigg|\Bigg\},
	\end{align}
	and
	\begin{align}
		\delta_2 \leq \max &\Bigg\{ 1 - \frac{1-(1-d_{\rm min})^2}{1-(1-d_{\rm max})^2}, (1-d_{\rm min})^2(1-r_\eta) \Bigg\},
	\end{align}
	where, 
	\begin{align}
		&d_{\rm max} = \max_{b\in \{X,Z\}} \{d_{b_0},d_{b_1}\} \leq d_{\rm det} (1+\Delta_{\rm dc}), \nonumber \\
		&d_{\rm min} = \min_{b\in \{X,Z\}} \{d_{b_0},d_{b_1}\} \geq d_{\rm det} (1-\Delta_{\rm dc}), \nonumber \\
		& r_\eta = (1-\Delta_\eta)/(1+\Delta_\eta)
		.
	\end{align}
	Note that the canonical detector model considered here is just an example. Our result is valid for all (independent across rounds) detector models as long as one can find bounds on $\delta_1$ and $\delta_2$ (which are defined in \cref{eq:deltaonedefined,eq:deltatwodefined}, see also Ref.~\cite{tupkaryPhaseError2024} for more details). Finally, we remark that our analysis covers certain models in which the detector efficiency depends on the mode of the incoming signals, and in which the eavesdropper exploits this to induce a mismatch by sending signals in specific modes, as explained in Ref.~\cite[Section VIII]{tupkaryPhaseError2024}.

	In \cref{fig:graph}, we simulate the secret-key rate of our enhanced security proof in the presence of both source and detector imperfections. 
	For this, we use the channel model in Ref.~\cite[Section SE3.A]{curras-lorenzoSecurityHighspeed2025} and assume the same experimental parameters as in the simulations in that work, i.e., fiber loss coefficient $\alpha_{DB} = 0.2$dB/km, dark count probability $p_d = 10^{-8}$ for all detectors \cite{pittaluga600kmRepeaterlike2021}, detector efficiency $\eta_{\rm det} = 0.73$ for all detectors, \cite{pittaluga600kmRepeaterlike2021}, error correction inefficiency $f = 1.16$. Also, we consider a number of transmitted signals $N = 10^{12}$, set $\epsilon_{\rm secr} = \epsilon_{\rm corr} = 10^{-10}$ and optimize over Alice's and Bob's basis choice probabilities. 
	To evaluate the achievable secret-key rate under source and detector imperfections, we consider $\epsilon \in \{0,10^{-6},10^{-3}\}$, $\delta_{\rm SPF} \in \{0,0.063\}$ and $\Delta := \Delta_{\eta} = \Delta_{\rm dc} \in \{0,0.05\}$.
	
	In \cref{fig:graph}, the solid black line shows the secret-key rate in the ideal scenario, and the colored (dashed) lines show the impact of source (detector) imperfections. Note that the difference between the dashed and solid lines for the same color is roughly the same for all three cases, which indicates that considering detector imperfections has a similar impact on performance independently of the source imperfections for our proof technique. Also, when comparing with the ideal case, the penalty on the secret-key rate appears to be the combined penalty of the two analyses.

	\begin{figure}[h]	\includegraphics[width=8.5cm]{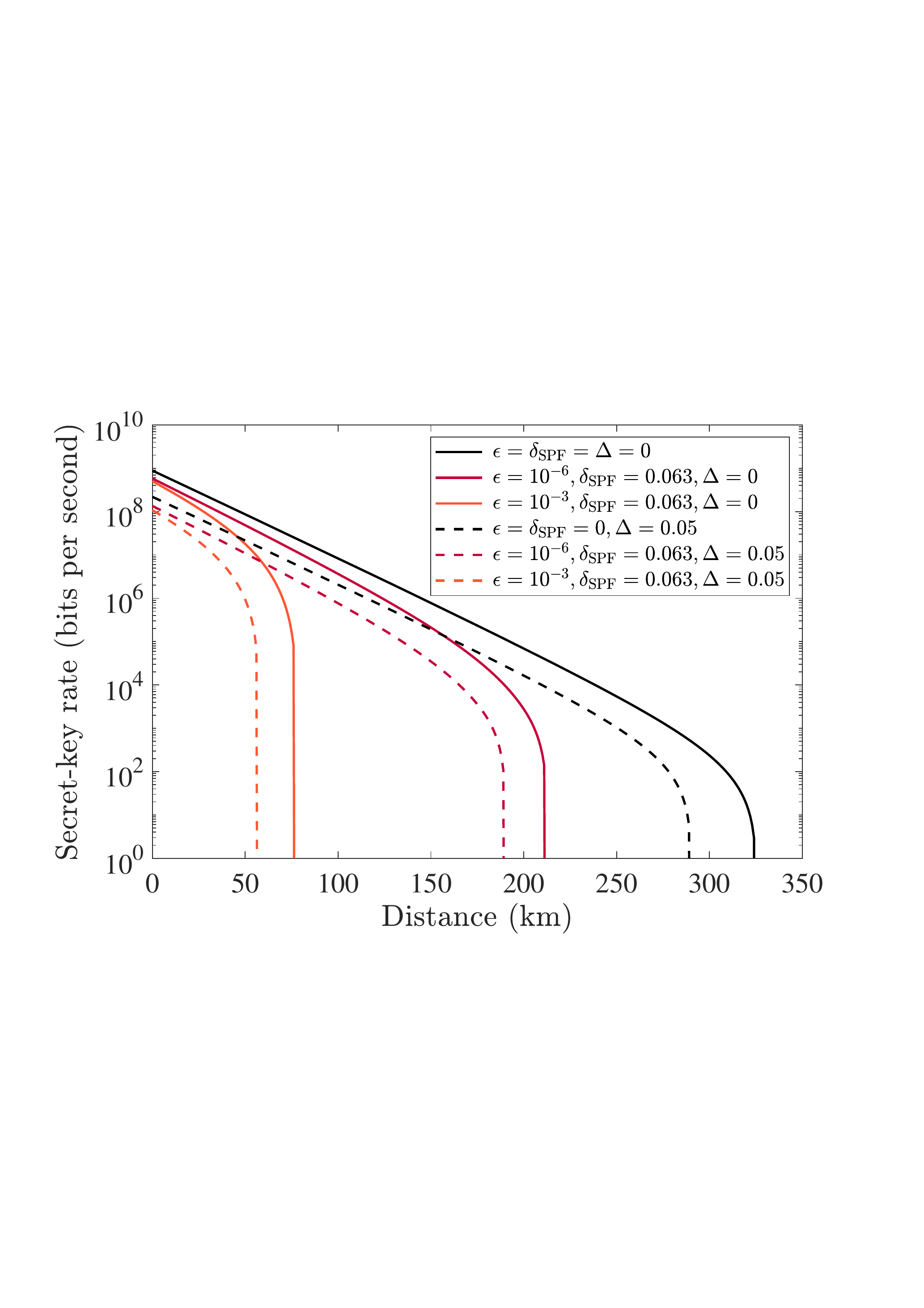} 
		\caption{Finite-size secret-key rate for a BB84-type protocol with $N=10^{12}$ transmitted signals under various imperfection scenarios. The black solid line shows the ideal BB84 case with no imperfections, and the black dashed line shows detector imperfections only (with $\Delta$ = 0.05); both of these use the phase-error bound in \cite{tupkaryPhaseError2024}. The colored solid lines show the impact of source imperfections only (for varying $\delta_{\rm SPF}$ and $\epsilon$) using the phase-error bound from Ref.~\cite{curras-lorenzoSecurityHighspeed2025}. The colored dashed lines correspond to the combined phase-error bound incorporating both source and detector imperfections, using the approach introduced in this work.}
		\label{fig:graph}
	\end{figure}

	\textit{Conclusion.}---We have presented a general method to extend phase-error-estimation-based security proofs for BB84-type protocols from the idealized case of basis-independent detection efficiency to realistic scenarios with detection efficiency mismatches. By applying our result to existing security analyses that already incorporate source imperfections, such as Ref.~\cite{curras-lorenzoSecurityHighspeed2025}, one obtains a unified finite-key security proof against general attacks that addresses both types of imperfections simultaneously---a critical requirement for practical QKD implementations. As noted earlier, to the best of our knowledge, this remains the only work capable of computing explicit key rates in such scenarios.

	Our result applies to BB84-type protocols in which Bob makes an active basis choice, and is therefore not directly applicable to measurement setups with passive basis choice. Also, our analysis assumes that all classical announcements occur after the full quantum communication step is complete; extending our results to allow for real-time announcements remains an open challenge for future work.
	
	For concreteness, in the main text, we have focused on single-photon BB84-type protocols. Below, in the End Matter, we explain how our result can also be applied to extend a large class of phase-error-estimation-based security proofs for decoy-state \cite{loDecoyState2005,maPracticalDecoy2005,hwangQuantumKey2003} BB84-type protocols to cover detector-efficiency mismatches. This potentially includes security proofs that address transmitter imperfections in such protocols, including imperfections in the implementation of the decoy-state method itself.

	\textit{Author Contributions.} --- {D.T.\ and S.N.\ initiated this project. All authors contributed to the theoretical results. M.P.\ and G.C.-L.\ were responsible for performing the simulations used to generate \cref{fig:graph}, and for the proofs in \cref{app:monotonicity}. G.C.-L.\ and M.P.\ wrote most of the paper, with contributions from all authors.}
	
	\textit{Acknowledgments.}---{We thank Kiyoshi Tamaki, Norbert L\"utkenkaus, Marcos Curty and Álvaro Navarrete for valuable discussions.  This work was supported by the Galician Regional Government (consolidation of Research Units: AtlantTIC), the Spanish Ministry of Economy and Competitiveness (MINECO), the Fondo Europeo de Desarrollo Regional (FEDER) through the grant No.~PID2020-118178RB-C21, MICIN with funding from the European Union NextGenerationEU (PRTR-C17.I1) and the Galician Regional Government with own funding through the "Planes Complementarios de I+D+I con las Comunidades Aut\'onomas" in Quantum Communication, the European Union’s Horizon Europe Framework Programme under the Marie Sk\l{}odowska-Curie Grant No.~101072637 (Project QSI) and the project "Quantum Security Networks Partnership" (QSNP, grant agreement No 101114043) and by the European Union via the EUROPEAN HEALTH AND DIGITAL EXECUTIVE AGENCY (HADEA) under the Project 101135225 — QuTechSpace. Part of this work was performed at the Institute for Quantum
		Computing, at the University of Waterloo, which is supported by Innovation, Science, and Economic Development
		Canada. This work was supported by NSERC under the Discovery Grants Program, Grant No. 341495. D.T.\ is supported by the Mike and Ophelia Laziridis Fellowship.}

	\section*{End matter}
	\textit{Application to decoy-state protocols.}---Our result can also be applied to extend phase-error-estimation based security proofs for decoy-state protocols to cover scenarios with detection efficiency mismatches. Ideally, in a decoy-state BB84 protocol, in each round, Alice selects a bit-and-basis setting $j \in \{0,1,2,3\}$ with probability $p_j$ and an intensity setting $\mu$ with probability $p_\mu$, and then generates a state
	\begin{equation}
		\label{eq:decoy_ideal_form}
		\rho_{j,\mu} = \sum_{m} p_{m\vert \mu} \ketbra{\psi_{m,j}}_{a},
	\end{equation}
	where $p_{m\vert \mu}$ follows a Poisson distribution of mean $\mu$ and $\ket{\psi_{m,j}}_a$ is an $m$-photon state with perfectly encoded bit-and-basis information $j$. Ref.~\cite{tupkaryPhaseError2024} shows how to extend the standard decoy-state proofs assuming such ideal sources to cover detection efficiency mismatches.
	In practice, however, due to source imperfections, the emitted states will not be exactly in the form of \cref{eq:decoy_ideal_form}. This may not only be due to imperfections introduced by the bit-and-basis encoder, but also imperfections in the implementation of the decoy-state method itself. In particular, the standard decoy-state analysis assumes that (1) Alice can perfectly tune the intensity of her pulses, (2) Alice's pulses have a fully random phase, and (3) no information about the intensity or phase of the pulses is leaked to the outside. However, in practice, these assumptions may not be met due to many reasons: discrete phase randomization \cite{caoDiscretephaserandomizedCoherent2015,sixtoSecretKey2023}, leakage of intensity setting information \cite{tamakiDecoystateQuantum2016}, imperfect intensity tuning \cite{kaminRenyiSecurity2025}, imperfect phase randomization \cite{naharImperfectDetectors2025} due to e.g.\ injection locking attacks \cite{wiesemannEvaluationQuantum2025}, leakage of phase information \cite{sixtoQuantumKey2025}, correlations between the intensity of consecutive pulses \cite{yoshinoQuantumKey2018,zapateroSecurityQuantum2021,sixtoSecretKey2023}, and correlations between the phase of consecutive pulses \cite{curras-lorenzoSecurityQuantum2023,marcominiCharacterisingHigherorder2024}. Here, we consider a very general description of the global state that would be emitted in the presence of these kind of imperfections and show that, provided that one already has a phase-error-based security proof that can incorporate them, one can apply our result to extend the proof to cover detector efficiency mismatches as well.
	
	In a generalized decoy-state BB84-type protocol, Alice chooses a sequence of bit-and-basis setting choices $j_1^N \coloneqq j_1 ... j_N$ and intensity setting choices $\mu_1^N \coloneqq \mu_1 ... \mu_N$ with some joint probability $p_{j_1^N\!,\mu_1^N}$, and generates a global $N$-round state
	\begin{equation}
		\rho_{j_1^N\!, \mu_1^N} = \sum_{m_1^N} p_{m_1^N\! \vert j_1^N\!,\mu_1^N} \ketbra*{\psi_{m_1^N\!,j_1^N\!,\mu_1^N}}_{a_1^N},
	\end{equation}
	where $a_1^N \coloneqq a_1 ... a_N$ is the sequence of photonic systems emitted by Alice, and $m_1^N = m_1 ... m_N$ identifies an eigenstate of the global state emitted by Alice, and can be regarded as the sequence of photon numbers (or, more generally, quasi-photon numbers) of the states emitted by Alice \footnote{Even more generally, one can consider any decomposition of the form
		\begin{equation}
			\rho_{j_1^N\!, \mu_1^N} = \sum_{m_1^N} p_{m_1^N\! \vert j_1^N\!,\mu_1^N} \tau_{m_1^N\!,j_1^N\!,\mu_1^N}.
		\end{equation}
		where the states $\tau_{m_1^N\!,j_1^N\!,\mu_1^N}$ are possibly mixed. In this case, one can simply consider purifications $\ket*{\psi_{m_1^N\!,j_1^N\!,\mu_1^N}}_{a_1^N S_1^N}$, where the shield systems $S_1^N$ are not available to Eve.}. Note that our model takes into account that the global state emitted by Alice may be arbitrarily correlated between rounds. 
	As usual, to prove security, we consider an equivalent scenario in which Alice generates a source-replacement state,
	\begin{equation}
		\begin{aligned}
			\ket{\Psi} =&\sum_{j_1^N\!,\mu_1^N} \sqrt{p_{j_1^N\!,\mu_1^N}} \ket*{j_1^N}_{A_1^N\!} \ket*{\mu_1^N}_{I_1^N\!} \\ &~\sum_{m_1^N} \sqrt{p_{m_1^N \vert j_1^N\!,\mu_1^N}} \ket*{m_1^N}_{M_1^N\!} \ket*{\psi_{m_1^N,j_1^N\!,\mu_1^N}}_{a_1^N\!},
		\end{aligned}
	\end{equation}
	where $A_1^N \coloneqq A_1 ... A_N$, $I_1^N \coloneqq I_1 ... I_N$ and $M_1^N = M_1 ... M_N$ are sequences of ancillary registers containing the information about the values of $j_1^N$, $\mu_1^N$ and $m_1^N$, respectively, and the states $\{\ket*{j_1^N}_{A_1^N}\}$, $\{\ket*{\mu_1^N}_{I_1^N}\}$, and $\{\ket*{m_1^N}_{M_1^N}\}$ form orthonormal bases in their respective Hilbert spaces. Note that this is a valid source replacement scheme since
	\begin{equation}
		\Tr_{A_1^N I_1^N M_1^N}\!\big[\ketbra*{\Psi}\big] = \sum_{j_1^N\!,\mu_1^N} p_{j_1^N\!,\mu_1^N}  \,\rho_{j_1^N\!, \mu_1^N} 
	\end{equation}
	In this scenario, Alice can measure the registers $A_1^N\!$ and $I_1^N\!$ to learn her choices of bit, basis and intensity values. Using this information, together with Bob's detection and basis announcements, Alice can determine which rounds are used for key generation. For concreteness, we assume  here that Alice and Bob extract the raw key from the events in which Bob chooses the $\zpovm$ POVM, Alice's encoding choice is $j\in \{0,1\}$ and, optionally, the rounds in which Alice selected a particular intensity choice.
	
	As in the main text, to prove security, we consider a scenario in which Alice and Bob first learn only which rounds are used for key generation, but do not learn their bit values yet. We also assume that Alice measures the registers $M_1^N$ to learn the quasi-photon number of her signals, and uses this information to tag the key generation rounds according to their value of $m$. Again, the security of the key generated in the actual protocol can be related to the \textit{phase} errors that Alice and Bob would obtain if, in the key generation rounds, Alice measured her ancillas $A_1^N$ in the $\{\ket{+}_A,\ket{-}_A\}$ basis and Bob performed his $\xpovm$ POVM. We can define $\bm{n_{K,1}}$ as the RV associated to the number of key generation rounds such that $m=1$, and $\bm{n_{\mathrm{ph},1}}$ as the number of phase errors within these rounds. In general, a security proof for decoy-state BB84-type protocols based on phase-error estimation consists of a proof for two statements. The first statement is a lower bound on the number of key generation rounds in which $m=1$,%
	\begin{equation}
		\label{eq:guarantee_nK1_decoy_end_matter}
		\Prindep \left[\bm{n_{K,1}} < \mathcal{M} (\bm{\vec n_\zpovm}) \right]\leq \epssp^2,
	\end{equation}
	while the second statement bounds the phase-error rate within these rounds,
	\begin{equation}
		\label{eq2:guarantee_BIDE_decoy_end_matter}
		\Prindep(\bm{e_{\mathrm{ph},1}} >    \Eindep^\mathrm{decoy}(\bm{\vec n_\xpovm},\bm{n_{K,1}})) \leq \epssrc^2,
	\end{equation} 
	Here, $\bm{\vec n_\zpovm}$ ($\bm{\vec n_\xpovm}$) is the random vector containing all the announced statistics from rounds in which Bob chose the $\zpovm$ ($\xpovm$) POVM, $\mathcal{M}(\bm{\vec n_\zpovm})$ is a function that provides a lower bound on $\bm{n_{K,1}}$ based on the observed statistics from the $\zpovm$-basis measurements, $\epssp^2$ is the failure probability associated with this bound,  and $\bm{e_{\mathrm{ph},1}} = \bm{n_{\mathrm{ph},1}}/\bm{n_{K,1}}$ represents the phase-error rate for the $n=1$ component. 
	Assuming for simplicity the typical scenario in which  $\Eindep^\mathrm{decoy}$ is non-increasing with $\bm{n_{K,1}}$, \cref{eq:guarantee_nK1_decoy_end_matter,eq2:guarantee_BIDE_decoy_end_matter} can be combined using the union bound,
	\begin{align}
		&\Prindep \left[\bm{n_{K,1}} < \mathcal{M} (\bm{\vec n_\zpovm}) \cup \bm{e_{\mathrm{ph},1}} >     \Eindep^\mathrm{decoy}\big(\bm{\vec n_\xpovm},\mathcal{M} (\bm{\vec n_\zpovm})\big) \right] \nonumber \\
		&\leq \epssp^2 + \epssrc^2,
		\label{eq:decoy_union_bound_ideal}
	\end{align}
	and using this combined bound, one can directly determine the final secret key length and its security parameter, see \cref{eq:general_union_bound_decoy,eq:key_length_decoy_endmatter} below.
	
	Our result can be directly applied to extend any bounds of the form in \cref{eq:guarantee_nK1_decoy_end_matter,eq2:guarantee_BIDE_decoy_end_matter} to cover detector efficiency mismatches. For simplicity, let us assume here the typical case in which $\Eindep^\mathrm{decoy}$ is non-increasing with $\bm{n_{K,1}}$ and $\Findepdecoy (\bm\vecnX,\bm{n_{K,1}}) \coloneqq \bm{n_{K,1}} \Eindepdecoy (\bm\vecnX,\bm{n_{K,1}})$ is non-decreasing with $\bm{n_{K,1}}$. Then, we have that, when Bob's measurement setup suffers from a detection efficiency mismatch parametrized by $\delta_1$ and $\delta_2$, it holds that
	{\small \begin{equation} \label{eq:decoy_union_bound_mismatch}
			\begin{aligned}
				&\Prdep \Bigg(\bm{n_{K,1}} < \mathcal{M} (\bm{\vec n_\zpovm}) \, \cup \bm{e_{\mathrm{ph},1}} > \\ 
				&\frac{\Eindepdecoy \left(\bm\vecnX,\mathcal{M} (\bm{\vec n_\zpovm})\right) + \delta_1 + \gamma^{\epsATb}_{\mathrm{bin}}(\mathcal{M} (\bm{\vec n_\zpovm}), \deltaone)}{1-\delta_2-\gamma^{\epsATc}_{\mathrm{bin}}\big(\mathcal{M} (\bm{\vec n_\zpovm}), \delta_2\big)}\Bigg) \\
				&\leq \epssp^2+  \epssrc^2 + \epsATb^2 + \epsATc^2.
			\end{aligned}
	\end{equation}}
	\!For a proof of these statements, as well as a more general result that does not need $\Eindepdecoy$ and $\Findepdecoy$ to satisfy any monotonicity conditions, see \cref{app:decoy_state_scenarios}.
	
	Given a general bound of the form in \cref{eq:decoy_union_bound_ideal} or \cref{eq:decoy_union_bound_mismatch}, i.e.,
	\begin{align}
		\label{eq:general_union_bound_decoy}
		\Prboth \left[\bm{n_{K,1}} < \mathcal{M} (\bm{\vec n_\zpovm}) \cup \bm{e_{\mathrm{ph},1}} >   \Eboth'\big(\bm{\vec n_\xpovm},\bm{\vec n_\zpovm}\big) \right] \leq \epspe^2,
	\end{align}
	security can be directly established using either EUR and LHL \cite{tomamichelUncertaintyRelation2011,tomamichelTightFinitekey2012,tomamichelLargelySelfcontained2017} or phase-error correction arguments \cite{koashiSimpleSecurity2009}. As demonstrated in \cite[Theorem 3]{tupkaryPhaseError2024}, a bound of this form directly yields finite-key security against general attacks in the variable-length framework \cite{ben-orUniversalComposable2005}, where the key length depends on observations during protocol execution. Specifically, one can achieve a secure key length 
	\begin{align}
		\bm{l} &= \mathcal{M} (\bm{\vec n_\zpovm}) \Big[1-h\big(\Eboth'(\bm{\vec n_\xpovm},\bm{\vec n_\zpovm})\big) \Big] - \bm\lambdaec \nonumber \\
		&- 2 \log(1/2 \epspa) - \log(2/\epsev),  \label{eq:key_length_decoy_endmatter}
	\end{align}
	that is $(\epscorr+\epssec)$-secure, with $\epscorr = \epsev$ and $\epssec = 2 \epspe + \epspa$.
	
	\clearpage
	
	\appendix

	\section{Description of Bob's measurement in the phase-error estimation protocol}
	\label{app:bobs_measurement}
	
	Here, we show how to find an equivalence between Bob's measurement in the actual protocol and a scenario in which he first applies filters to determine whether or not he obtains a detection, followed by a two-outcome POVM that is guaranteed to output a bit value. We then use this equivalence to define Bob's action in the phase-error estimation protocol, for both the basis-independent and basis-dependent cases. We remark that the content of this Appendix is simply a reformulation of the measurement process as an equivalent multi-step measurement process that implements the same quantum to classical map, which is itself an old idea. Our approach is most similar to that presented in \cite[Lemma 1 and Section IV]{tupkaryPhaseError2024}, and is included here for completeness.
	
	\subsection{Basis-independent detection efficiency}
	\label{subapp:BIDE}
	
	In this case, Bob performs two POVMs $\mathcal{\zpovm} = \{\Gamma_0^{(\zpovm)},\Gamma_1^{(\zpovm)},\Gamma_\bot^{(\zpovm)}\}$ and $\mathcal{\xpovm} = \{\Gamma_0^{(\xpovm)},\Gamma_1^{(\xpovm)},\Gamma_\bot^{(\xpovm)}\}$ whose operators corresponding to an undetected round are the same for both bases, i.e., $\Gamma_\bot^{(\zpovm)} = \Gamma_\bot^{(\xpovm)}$. This implies that Bob could have substituted his actual measurement by the following:
	
	\begin{enumerate}
		\item For each round, apply the filtering POVM $\{\tilde F, \mathbb{I} - \tilde F\}$, where $\tilde F = \Gamma_0^{(\zpovm)} + \Gamma_1^{(\zpovm)} = \Gamma_0^{(\xpovm)} + \Gamma_1^{(\xpovm)}$.
		\item For the rounds that pass the filter, decide his basis choice $\beta \in \{\zpovm,\xpovm\}$.
		\item Measure these states using the two-outcome POVM $\{G_0^{(\beta)},G_1^{(\beta)}\}$, where
		\begin{equation}
			G_b^{(\beta)} := \sqrt{\tilde F^+} \, \Gamma_b^{(\beta)} \, \sqrt{\tilde F^+} + P_b^{(\beta)},
		\end{equation}
		with $\tilde F^+$ denoting the pseudoinverse of $\tilde F$, and $P_b^{(\beta)}$ any positive operators satisfying $\sum_{b \in \{0,1\}} P_b^{(\beta)} = \mathbb{I} - \Pi_{\tilde F}$, where $\Pi_{\tilde F}$ denotes the projector onto the support of $\tilde F$.
	\end{enumerate}
	
	As explained in the main text, thanks to this equivalence, we can assume that Alice and Bob learn which subset of $\mathcal{\zpovm}$ rounds are used to generate the key before learning their bit values. In the phase-error estimation protocol, we consider that, in these rounds, Bob performs the two-outcome POVM $\{G_0^{(\xpovm)},G_1^{(\xpovm)}\}$ rather than $\{G_0^{(\zpovm)},G_1^{(\zpovm)}\}$. We refer to this scenario as $\Sindep$.
	
	\subsection{Basis-dependent detection efficiency}
	\label{subapp:BDDE}
	
	In this case, the operators corresponding to an undetected round are in general not equal, i.e., $\Gamma_\bot^{(\zpovm)} \neq \Gamma_\bot^{(\xpovm)}$. Here, we can define the basis dependent filters $\tilde F_\zpovm = \Gamma_0^{(\zpovm)} + \Gamma_1^{(\zpovm)}$ and $\tilde F_\xpovm = \Gamma_0^{(\xpovm)} + \Gamma_1^{(\xpovm)}$ and assume the following equivalent scenario for Bob:
	
	\begin{enumerate}
		\item For each round, decide his basis choice $\beta \in \{\zpovm,\xpovm\}$.
		\item For each round, apply the filter $\tilde F_\beta$ corresponding to the selected basis $\beta$.
		\item For the rounds that pass the filter, measure using the two-outcome POVM $\{G_0^{(\beta)},G_1^{(\beta)}\}$, where
		\begin{equation}
			G_b^{(\beta)} := \sqrt{\tilde F_\beta^+} \, \Gamma_b^{(\beta)} \, \sqrt{\tilde F_\beta^+} + P_b^{(\beta)}, 
		\end{equation}
		with $P_b^{(\beta)}$ being any positive operators satisfying $\sum_{b \in \{0,1\}} P_b^{(\beta)} = \mathbb{I} - \Pi_{\tilde F_\beta}$.
		
	\end{enumerate}

	Again, thanks to this equivalence, we can assume that Alice and Bob learn which subset of $\mathcal{\zpovm}$ rounds are used to generate the key before learning their bit values. In the phase-error estimation protocol, we consider that, in these rounds, Bob performs the two-outcome POVM $\{G_0^{(\xpovm)},G_1^{(\xpovm)}\}$ rather than $\{G_0^{(\zpovm)},G_1^{(\zpovm)}\}$.
	
	\subsubsection*{Definition of $\delta_1$ and $\delta_2$}
	
	Now, we use the following idea from Ref.~\cite{tupkaryPhaseError2024}: if we find an operator $\tilde F$ such that $\tilde F \geq \tilde F_\zpovm$ and $\tilde F \geq \tilde F_\xpovm$, we can define the following equivalent steps for Bob, which replace steps 1 and 2 above:
	\begin{enumerate}
		\setcounter{enumi}{-1}
		\item For each round, apply the basis-independent filtering POVM $\{\tilde F, \mathbb{I}-\tilde F\}$.
		\item For the rounds that pass the filter, decide his basis choice $\beta \in \{\zpovm,\xpovm\}$.
		\item Perform the second basis-dependent filtering POVM $\{F_\beta,\mathbb{I}-F_\beta\}$, where
		\begin{equation}
			F_\beta := \sqrt{\tilde F^+} \, \tilde F_\beta \, \sqrt{\tilde F^+} + \mathbb{I} - \Pi_{\tilde F}. 
		\end{equation}
		
	\end{enumerate}
	
	Note that this is equivalent since applying first the filter $\tilde F$ and then the filter $F_\zpovm$ ($F_\xpovm$) is equivalent to applying the overall filter $\tilde F_\zpovm$ ($\tilde F_\xpovm$). This allows us to define the following parameters from \cite{tupkaryPhaseError2024} which are used in the proof of our main result in \cref{app:proof_main_result}:
	\begin{equation} \label{eq:deltaonedefined}
		\begin{aligned}
			\delta_1 &= \Big\lVert\big(\mathbb{I}_A \otimes\sqrt{F_\zpovm}\big) G_{\neq}^{(\xpovm)} \big(\mathbb{I}_A \otimes\sqrt{F_\zpovm}\big) \\
			&- \big(\mathbb{I}_A \otimes\sqrt{F_\xpovm}\big) G_{\neq}^{(\xpovm)} \big(\mathbb{I}_A \otimes\sqrt{F_\xpovm}\big) \Big \rVert_\infty,
		\end{aligned}
	\end{equation}
	with $G_{\neq}^{(\xpovm)} = \ketbra*{+}_A \otimes G_1^{(\xpovm)} + \ketbra*{-}_A \otimes G_0^{(\xpovm)}$ and 
	\begin{equation}\label{eq:deltatwodefined}
		\delta_2 = \big\lVert \mathbb{I} - F_\zpovm \big\rVert_\infty.
	\end{equation}

	\section{Proof of our main result}
	\label{app:proof_main_result}
	
	Here, we prove our main result in \cref{eq2:objective}. For this, we define a single ``global'' phase-error estimation protocol, which is depicted in \cref{fig:diagram}, where Bob makes an active decision whether to run the basis-independent detection efficiency scenario ($\Sindep$) or the basis-dependent detection efficiency scenario  ($\Sdep$):
	
	\begin{enumerate}[start=1]
		\item Alice prepares her global source-replacement state in \cref{eq:source_replacement_state}.

		\item Bob applies the basis independent filter $\{\tilde F,  \mathbb{I} - \tilde F\}$. Let $\mathcal{\tilde N}$ be the  set of rounds that pass this filter.
		
		\item For each round in $\mathcal{\tilde N}$, Bob chooses the $\zpovm$ or $\xpovm$ POVM, but does not perform any measurement yet. Let $\mathcal{\tilde N}_\zpovm$ ($\mathcal{\tilde N}_{\xpovm}$) be the subset of $\mathcal{\tilde N}$ for which Bob chooses the $\zpovm$ ($\xpovm$) POVM.
		
		\item For each round in $\mathcal{\tilde N}_\zpovm$, Alice attempts a projection onto the subspace spanned by $\{\ket{0}_A,\ket{1}_A\}$. Let $\mathcal{\tilde N}_K$ be the subset of rounds for which this projection is successful, and let $\bm{\tilde n_K}$ be the random variable associated to the size of $\mathcal{\tilde N}_K$.
		
		\item \label{listitem:test_rounds_meas} For each round in $\mathcal{\tilde N}_{\xpovm}$, Bob applies the $\xpovm$-basis-dependent filter $\{F_\xpovm,\mathbb{I}-F_\xpovm\}$. Then, for the rounds that  pass the filter, Alice performs a projection onto $\{\ket{j}_A\}_{j\in \{0,1,...\}}$, learning her setting choice, while Bob performs the two-outcome POVM $\{G_{0}^{(\xpovm)},G_1^{(\xpovm)}\}$, learning his bit value. Let $\bm{\vecnX}$ be a random vector containing all the RVs associated to the number of (fine-grained) outcomes of these rounds.

		\item \label{listitem:key_rounds_meas} Bob decides whether to run basis independent ($\Sindep$) or the basis dependent ($\Sdep$) scenario. Then, for each round in $\mathcal{\tilde N}_K$:
		
		\begin{enumerate}
			\item[]If $\Sindep$: Bob applies the $\xpovm$-dependent filter $\{F_\xpovm,\mathbb{I}-F_\xpovm\}$; 
			\item[] If $\Sdep$: Bob applies the $\zpovm$-dependent filter $\{F_\zpovm,\mathbb{I}-F_\zpovm\}$. 
		\end{enumerate}
		
		Let $\mathcal{N}_K$ be the subset of rounds in $\mathcal{\tilde N}_K$ that pass the filter, and let $\bm{n_K}$ be the random variable associated to the size of $\mathcal{N}_K$.
		
		\item \label{listitem:final_ph_rounds_meas} Let $\{G_{\neq}^{(X)},G_{=}^{(X)}\}$ be Alice and Bob's joint two-outcome $X$-basis POVM, i.e., $G_{\neq}^{(X)} = \ketbra*{+}_A \otimes G_1^{(\xpovm)} + \ketbra*{-}_A \otimes G_0^{(\xpovm)}$. For each round in $\mathcal{N}_K$, Alice and Bob measure $\{G_{\neq}^{(X)},G_{=}^{(X)}\}$. Let $\bm{n_\mathbf{ph}}$ be the RV associated to the number of events in which they obtain the outcome $G_{\neq}^{(X)}$, and let $\bm{e_\mathbf{ph}} \coloneqq \bm{n_\mathbf{ph}}/\bm{n_K}$.\\
	\end{enumerate}
	
	Note that, in $\Sindep$, Bob applies exactly the same filters to the key rounds and the test rounds. Therefore, this is equivalent to the phase-error estimation protocol of a scenario with basis-independent detection efficiency, defined in \cref{subapp:BIDE}. On the other hand, $\Sdep$ corresponds to the phase-error estimation protocol of a scenario with basis-dependent detection efficiency parameterized by $\delta_1$ and $\delta_2$, see \cref{subapp:BDDE}. We define $\Prindep$ ($\Prdep$) as the probability measure conditional on the selection of $\Sindep$ ($\Sdep$).
	
	In what follows, we prove the results presented in the main text. In \cref{fig:diagram}, we provide a  representation of the scenario described above and the proof below.

	\begin{widetext}
		
		\begin{figure}[ht]
			\centering
			\renewcommand{\thefigure}{A\arabic{figure}}
			\setcounter{figure}{0}
			\resizebox{\linewidth}{!}{%
				\begin{tikzpicture}[node distance=2cm] 
					\node (start) {$ \rho_{A^nB^nE^n}$ };
					\node (filtering) [process, below of=start] {Basis-independent filtering \\
						using $\{ \tilde{F} , \mathbb{I} - \tilde{F}\}$};
					
					\node (center) [process, below of=filtering] {Alice and Bob's Basis Choice. \\ Test vs Key. };

					\node(keystate) [point, right of=center,xshift = 2cm,yshift = -2cm] {};
					
					\node(keystateabove) [emptyprocess, above of=keystate,yshift = -1.3cm] {$\rho_{A^{\bm{\tilde n_K}}B^{\bm{\tilde n_K}} C^n E^n}$};

					\node (test) [process, below left of=center, xshift=-2.5cm, yshift=-1.5cm] 
					{ \textbf{Testing Rounds} \\
						Bob uses $\xpovm$ POVM. \\
						$\bm 
						\vecnX$ is observed.};
					
					
					\node (keyvirtXX) [virtualprocess, below right of=keystate, xshift=-4.5cm, yshift=-1cm] 
					{  \textbf{$\Sindep$ scenario} \\
						Bob applies the $\xpovm$-basis-dependent filter \\
						$\{F_\xpovm,\mathbb{I}-F_\xpovm\}$ 	};

					\node (keyXX) [virtualprocess, below of=keyvirtXX, yshift= -2cm] 
					{  Measure $\bm{n_K}$ surviving rounds\\
						using POVM $\{G_{\neq}^{(\xpovm)},G_{=}^{(\xpovm)}\}$. \\
						Phase error rate   $\bm{e_\mathbf{ph}} \coloneqq \bm{n_\mathbf{ph}}/\bm{n_K}$.};
					
					\node (keyvirt) [virtualprocess, below right of=keystate, xshift=2.5cm, yshift=-1cm] 
					{  \textbf{$\Sdep$ scenario} \\
						Bob applies the $\zpovm$-basis-dependent filter \\
						$\{F_\zpovm,\mathbb{I}-F_\zpovm\}$ 	};
					
					\node (key) [virtualprocess, below  of=keyvirt,yshift = -2cm] 
					{ Measure $\bm{n_K}$  surviving rounds\\
						using POVM $\{G_{\neq}^{(\xpovm)},G_{=}^{(\xpovm)}\}$. \\
						Phase error rate   $\bm{e_\mathbf{ph}} \coloneqq \bm{n_\mathbf{ph}}/\bm{n_K}$. };

					\draw [arrow] (start) -- ++ (filtering); 
					\draw [arrow, ] (filtering) -- ++ (center) node[midway,right] {$\tilde{F}$};
					\draw [arrow] (center.south) -- (keystate) ;
					\draw [dashedarrow] (keystate) -- (keyvirtXX);
					\draw [dashedarrow] (keystate) -- (keyvirt);
					\draw [arrow](center.south) -- (test);

					\draw [dashedarrow] (keyvirt) -- (key) node [midway, left] {$F_\zpovm$};
					
					\draw [arrow] (filtering.east) -- ++(2,0) node[midway, above] {$\mathbb{I}-\tilde{F}$} node[right] {Discard};
					
					\draw [arrow] (center.east) -- ++(1.8,0) node[midway, above] {} node[right] {Discard};
					
					\draw [arrow,dashed] (keyvirtXX) -- ++(-5,0) node[midway, above] {$\mathbb{I}-{F_\xpovm}$} node[left] {Discard};
					
					\draw [arrow,dashed] (keyvirt) -- ++(5,0) node[midway, above] {$\mathbb{I}-{F_\zpovm}$} node[right] {Discard};
					
					\draw [dashedarrow] (keyvirtXX)  -- ++ (keyXX) node[midway, right] {$F_\xpovm$};
					
					
					\draw ($(keyXX.south west)!0.2!(keyXX.north west)$) edge[dashed, <-, thick, bend left=50, color = blue]  node[pos=0.2,below,align=center,yshift=-0.7em,xshift=-3em] {Estimate $\bm{\nph}$ from $\bm\vecnX$ \\ using existing bound (e.g., \cite{curras-lorenzoSecurityHighspeed2025})} ($(test.south west)!0.3!(test.north west)$);
					
					\draw (keyvirtXX) edge [dashed, <-, thick, bend right, color = blue, align = center]  node[midway, left] { $~\bm{n_K} \leq \bm{\tilde n_K}$ 
					} (keyXX);
					
					\draw (keyvirt) edge [dashed, ->, thick, bend left, color = blue, align = center]  node[midway, right] {~$\bm{n_K} \gtrsim \bm{\tilde n_K} (1-\delta_2)$ \\ \cite[Lemma 4]{tupkaryPhaseError2024}} (key);
					
					\draw (keyvirtXX) edge [dashed, ->, thick, bend right, color = blue,align = center,yshift=-1.5em]  node[midway,below] {$ \bm{\nph}\ \lesssim \bm{\nph} + \bm{\tilde n_K} \delta_1$ \\ \cite[Lemma 3]{tupkaryPhaseError2024}} (keyvirt);

			\end{tikzpicture}}
			
			\caption{Flowchart describing the scenario considered in \cref{app:proof_main_result} (black lines and text) and an informal representation of the proof in \cref{thm:maintheorem} (blue lines and text). We consider a global phase-error estimation protocol in which Bob makes an active decision to run $\Sindep$ (the basis-independent scenario) or $\Sdep$ (the basis-dependent scenario). For clarity, we include the scenario $\Sindep,\Sdep$ as superscripts when denoting random variables that depend on those scenarios in the figure. We use \cite[Lemma 4]{tupkaryPhaseError2024} to relate the statistics of $\bm{n_{K}}$ and $\bm{\tilde n_{K}}$ conditional on $\Sdep$. We use \cite[Lemma 3]{tupkaryPhaseError2024} to relate the conditional distribution of $\bm\nph$ depending on whether Bob chooses $\Sindep$ or $\Sdep$.}
			\label{fig:diagram}
		\end{figure}

		\begin{theorem} \label{thm:maintheorem}
			Consider the scenario defined above, and suppose that
			\begin{equation}
				\label{eqapp:guarantee_BIDE}
				\Prindep(\bm{e_\mathbf{ph}} > 	\Eindep(\bm{\vec n_\xpovm},\bm{n_K})) \leq \epssrc^2.
			\end{equation}   
			Let $\mathcal{N}_{\leq}(m) \coloneqq \{0,1,...,m\}$ be the set of non-negative integers until $m$. Then, for any $\epsATb,\epsATc>0$, define
			\begin{equation}
				\mathcal{W}_{\delta_2}(m) \coloneqq \mathcal{N}_{\leq} \left(\left\lfloor\frac{m}{1-\delta_2-\gamma^{\epsATc}_{\mathrm{bin}}(m, \delta_2)} \right \rfloor\right).
			\end{equation}
			Then, 
			\begin{equation}
				\Prdep \left(\bm\eph > \frac{\max_{n \in \mathcal{W}_{\delta_2}(\bm{n_K})}  n \, \Eindep (\bm\vecnX,n)}{\bm{n_K}}  + \frac{ \delta_1 + \gamma^{\epsATb}_{\mathrm{bin}}(\bm{n_K}, \deltaone)}{1-\delta_2-\gamma^{\epsATc}_{\mathrm{bin}}(\bm{n_K}, \delta_2)} \right) \leq \epssrc^2 + \epsATb^2 + \epsATc^2,
			\end{equation}
			where $\delta_1, \delta_2$ are defined in \cref{eq:deltaonedefined,eq:deltatwodefined}, and $\gamma_\mathrm{bin}$ is a finite-size deviation term defined as
			\begin{equation} \label{eq:gammabin_defined}
				\gamma^{\varepsilon}_{\mathrm{bin}}(n, \delta) := \min\left\{x \geq 0 : \sum_{i = \lfloor n(\delta+x) \rfloor}^n \binom{n}{i} \delta^i (1-\delta)^{n-i} \leq \varepsilon^2\right\}.
			\end{equation}

		\end{theorem}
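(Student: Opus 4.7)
The strategy is to run all quantities $\bm{\tilde n_K}$, $\bm{n_K}$, $\bm{n_\mathbf{ph}}$ and $\bm\vecnX$ on the common sample space defined by the global protocol, and compare the $\Sindep$ and $\Sdep$ branches by noting that everything prior to step~\ref{listitem:key_rounds_meas} is identical in the two branches; the only source of difference is the filter chosen at that step. On this coupled space I would chain three high-probability estimates: (i) the hypothesis \cref{eqapp:guarantee_BIDE}, which bounds $\bm{n_\mathbf{ph}}$ under $\Sindep$; (ii) Lemma~3 of \cite{tupkaryPhaseError2024}, which upper-bounds $\bm{n_\mathbf{ph}}$ under $\Sdep$ by its $\Sindep$ counterpart plus an additive $\bm{\tilde n_K}\delta_1$ correction, smeared by a binomial concentration term of size $\bm{\tilde n_K}\,\gamma^{\epsATb}_{\mathrm{bin}}(\bm{\tilde n_K},\delta_1)$ and failure probability $\epsATb^2$; and (iii) Lemma~4 of \cite{tupkaryPhaseError2024}, which lower-bounds $\bm{n_K}$ under $\Sdep$ by $\bm{\tilde n_K}(1-\delta_2-\gamma^{\epsATc}_{\mathrm{bin}}(\bm{n_K},\delta_2))$ with failure probability $\epsATc^2$; the latter is rearranged into $\bm{\tilde n_K}\leq \bm{n_K}/(1-\delta_2-\gamma^{\epsATc}_{\mathrm{bin}}(\bm{n_K},\delta_2))$.

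Next I would resolve the mismatch between $\bm{n_K}^{\Sindep}$ and $\bm{n_K}^{\Sdep}$, which is the crux of the argument. Write $\bm{n_K}^{(*)}$ for the value of $\bm{n_K}$ that would arise had Bob chosen branch $*$ at step~\ref{listitem:key_rounds_meas}. Because each branch applies a filter operator bounded above by $\mathbb{I}$ to $\bm{\tilde n_K}$ detected rounds, we have deterministically $\bm{n_K}^{\Sindep}\in\{0,1,\dots,\bm{\tilde n_K}\}$. Combined with the Lemma~4 bound on $\bm{\tilde n_K}$, this forces $\bm{n_K}^{\Sindep}\in\mathcal{W}_{\delta_2}(\bm{n_K}^{\Sdep})$ except with probability $\epsATc^2$. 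The hypothesis, rewritten in the form $\bm{n_\mathbf{ph}}^{\Sindep}\leq\bm{n_K}^{\Sindep}\,\Eindep(\bm\vecnX,\bm{n_K}^{\Sindep})$, can therefore be upper-bounded uniformly over this set by $\max_{n\in\mathcal{W}_{\delta_2}(\bm{n_K}^{\Sdep})} n\,\Eindep(\bm\vecnX,n)$. This produces the first term of the claimed bound after dividing by $\bm{n_K}^{\Sdep}$.

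Substituting the Lemma~3 and Lemma~4 inequalities and then dividing by $\bm{n_K}^{\Sdep}$, I obtain
\begin{equation*}
\bm\eph^{\Sdep} \leq \frac{\max_{n\in\mathcal{W}_{\delta_2}(\bm{n_K})} n\,\Eindep(\bm\vecnX,n)}{\bm{n_K}} + \frac{\delta_1 + \gamma^{\epsATb}_{\mathrm{bin}}(\bm{\tilde n_K},\delta_1)}{1-\delta_2-\gamma^{\epsATc}_{\mathrm{bin}}(\bm{n_K},\delta_2)}.
\end{equation*}
The final massage is to replace $\gamma^{\epsATb}_{\mathrm{bin}}(\bm{\tilde n_K},\delta_1)$ by $\gamma^{\epsATb}_{\mathrm{bin}}(\bm{n_K},\delta_1)$: this uses the monotonicity of $\gamma^{\varepsilon}_{\mathrm{bin}}(\cdot,\delta)$, which is non-increasing in its first argument (from the definition in \cref{eq:gammabin_defined}, a larger sample gives a tighter concentration), together with $\bm{\tilde n_K}\geq\bm{n_K}^{\Sdep}$ coming from $F_\zpovm\leq\mathbb{I}$. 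A union bound over the three failure events yields the total failure probability $\epssrc^2+\epsATb^2+\epsATc^2$.

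The main obstacle is precisely step two above: the hypothesis furnishes a bound indexed by $\bm{n_K}^{\Sindep}$, a counterfactual random variable that is \emph{not} observed in the $\Sdep$ branch and which is generally distinct from $\bm{n_K}^{\Sdep}$. The set $\mathcal{W}_{\delta_2}(\bm{n_K})$ and the maximization over it are tailor-made to encode all values of $\bm{n_K}^{\Sindep}$ that are consistent with the observed $\bm{n_K}^{\Sdep}$ through the Lemma~4 bridge. Once this substitution is justified, the rest of the argument reduces to combining the three tail inequalities and one monotonicity observation.
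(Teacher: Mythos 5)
Your proposal is correct and follows essentially the same route as the paper's proof: the hypothesis applied on the $\Sindep$ branch, Lemma~3 of \cite{tupkaryPhaseError2024} to transfer the phase-error count to $\Sdep$ with the $\delta_1+\gamma^{\epsATb}_{\mathrm{bin}}$ penalty, Lemma~4 to relate $\bm{n_K}$ to $\bm{\tilde n_K}$ and hence justify the maximization over $\mathcal{W}_{\delta_2}(\bm{n_K})$, monotonicity of $\gamma_{\mathrm{bin}}$ in its first argument, and a union bound over the three failure events. The one point needing care is your counterfactual ``coupling'' of $\bm{n_K}^{\Sindep}$ and $\bm{n_K}^{\Sdep}$ --- these arise from incompatible filter measurements and need not admit a joint distribution --- but this is harmless and is handled in the paper by first replacing $\Findep(\bm{\vecnX},\bm{n_K})$ with $\max_{n\in\mathcal{N}_{\leq}(\bm{\tilde n_K})}\Findep(\bm{\vecnX},n)$, which depends only on pre-branch data, before invoking Lemma~3 conditionally on $(\bm{\tilde n_K},\bm{\vecnX})$; this is exactly the mechanism your maximization over $\mathcal{W}_{\delta_2}$ is meant to encode.
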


		\begin{proof}
			It is helpful to refer to \cref{fig:diagram} during this proof. Our starting bound, \cref{eqapp:guarantee_BIDE}, can be equivalently expressed as
			\begin{align}
				\label{eq2:guarantee_nph}
				&\Prindep(\bm{n_\mathbf{ph}} > 	\Findep(\bm{\vecnX},\bm{n_K})) 
				= \Prindep(\bm{e_\mathbf{ph}}\, \bm{n_{K}} > 	\Findep(\bm{\vecnX},\bm{n_K})) \nonumber  \\ 
				&= \Prindep(\bm{e_\mathbf{ph}} > 	\Findep(\bm{\vecnX},\bm{n_K})/\bm{n_K}) = \Prindep(\bm{e_\mathbf{ph}} > 	\Eindep(\bm{\vec n_\xpovm},\bm{n_K})) \underset{{\cref{eqapp:guarantee_BIDE}}}{\leq} \epssrc^2,
			\end{align}
			where $\Findep\big(\bm{\vec n_\xpovm}, \bm{n_{K}}\big) \coloneqq \bm{n_K} \Eindep\big(\bm{\vec n_\xpovm}, \bm{n_{K}}\big),$ as defined in \cref{eq:Findepdef}.
			We wish to transform the above bound to the scenario $\Sdep$. To do so, we first get rid of the $\bm{n_K}$ dependence via
			
			\begin{equation}
				\begin{gathered}
					\label{eq2:guarantee_our_proof_modified}
					\Prindep(\bm{n_\mathbf{ph}} > \max_{n \in \mathcal{N}_{\leq}(\bm{\tilde n_K)}}	\Findep(\bm{\vecnX},n)) \leq \Prindep(\bm{n_\mathbf{ph}} > 	\Findep(\bm{\vecnX},\bm{n_K})) \underset{{\cref{eq2:guarantee_nph}}}{\leq} \epssrc^2,
				\end{gathered}
			\end{equation}
			which holds since $\bm{n_K} \leq \bm{\tilde n_K}$ and thus $\bm{n_K} \in \mathcal{N}_{\leq}(\bm{\tilde n_K})$.

			Now, consider the state of the rounds in $\mathcal{\tilde N}_K$ just before Step (\ref*{listitem:key_rounds_meas}), after observing the event $\Omega(\tilde n_K,\vecnX)$, corresponding to  $\bm{\tilde n_K} = \tilde n_K$ and $\bm{\vecnX} = \vecnX$. Alice and Bob's measurements in Steps (\ref*{listitem:key_rounds_meas}) and (\ref*{listitem:final_ph_rounds_meas}) can be regarded as a single three-outcome POVM $\{\sqrt{F^{(\beta)}} G_{\neq}^{(\xpovm)} \sqrt{F^{(\beta)}},\sqrt{F^{(\beta)}} G_{=}^{(\xpovm)} \sqrt{F^{(\beta)}},\mathbb{I}-F^{(\beta)}\}$ applied to each round in $\mathcal{\tilde N}_K$, where $\beta = \xpovm$ for $\Sindep$ and $\beta = \zpovm$ for $\Sdep$. The POVM elements corresponding to a phase error in each scenario, i.e., $\sqrt{F^{(\xpovm)}} G_{\neq}^{(\xpovm)} \sqrt{F^{(\xpovm)}}$ and $\sqrt{F^{(\zpovm)}} G_{\neq}^{(\xpovm)} \sqrt{F^{(\zpovm)}}$, are close to one another, as quantified by $\deltaone$ (see \cref{eq:deltaonedefined}). Therefore, the statistics of $\bm{\nph}$ in the two scenarios should also be similar. This can be formalized by applying \cite[Lemma 3]{tupkaryPhaseError2024}, obtaining 
			the following statement: for any $e \geq 0$,
			\begin{equation}  \label{eq2:lemma3_start}
				\Prdep(\bm{n_\mathbf{ph}} > \tilde n_K \big( e +  \deltaone + \gamma^{\epsATb}_{\mathrm{bin}}(\tilde n_K,\deltaone) \big) )_{ | \Omega(\tilde n_K,\vecnX)}\leq \Prindep(\bm{n_\mathbf{ph}} > \tilde n_K e)_{ | \Omega(\tilde n_K,\vecnX)}  +\epsATb^2,
			\end{equation}
			where $\gamma^{\varepsilon}_{\mathrm{bin}}(x,y)$ is a function introducing a finite-size deviation term arising from sampling the binomial distribution, defined in \cref{eq:gammabin_defined}.  Substituting $e
			= \max_{n \in \mathcal{N}_{\leq}(\tilde n_K)} \Findep(\vecnX, n)/\tilde n_K$, we obtain

			\begin{align}
				&\Prdep\Big(\bm{n_\mathbf{ph}} > \max_{n \in \mathcal{N}_{\leq}(\tilde n_K)} \Findep(\vecnX, n) + \tilde n_K \left( \deltaone + \gamma^{\epsATb}_{\mathrm{bin}}(\tilde n_K,\deltaone) \right) \Big)_{ | \Omega(\tilde n_K,\vecnX)} \nonumber \\
				&\leq \Prindep\left(\bm{n_\mathbf{ph}} > \max_{n \in \mathcal{N}_{\leq}(\tilde n_K)} \Findep(\vecnX, n) \right)_{ | \Omega(\tilde n_K,\vecnX)} + \epsATb^2.
				\label{eq2:lemma3_step2}
			\end{align}

			This statement can be generalized to hold for the random variables themselves via:
			\begin{equation}  \label{eq2:lemma3_step4}
				\begin{aligned}
					&\Prdep\left(\bm{n_\mathbf{ph}} > \max_{n \in \mathcal{N}_{\leq}(\bm{\tilde n_K})} \Findep(\bm{\vecnX},n) +\bm{\tilde n_K} \big(  \deltaone + \gamma^{\epsATb}_{\mathrm{bin}}(\bm{n_K},\deltaone) \big) \right)\\
					&\underset{\substack{\bm{n_K} \leq \bm{\tilde n_K} \\ \gamma_\mathrm{bin}^{\varepsilon} \mathrm{ decreas.}}}{\leq}\Prdep\left(\bm{n_\mathbf{ph}} > \max_{n \in \mathcal{N}_{\leq}(\bm{\tilde n_K})} \Findep(\bm{\vecnX},n) +\bm{\tilde n_K} \big(  \deltaone + \gamma^{\epsATb}_{\mathrm{bin}}(\bm{\tilde n_K},\deltaone) \big) \right)\\
					&= \sum_{\tilde n_K,\vecnX} \Prdep(\bm{\tilde n_K} = \tilde n_K,\bm{\vecnX} = \vecnX) \Prdep\left(\bm{n_\mathbf{ph}} > \max_{n \in \mathcal{N}_{\leq}(\tilde n_K)} \Findep(\vecnX,n) +\tilde n_K \big(  \deltaone + \gamma^{\epsATb}_{\mathrm{bin}}(\tilde n_K,\deltaone) \big) \right)_{ | \Omega(\tilde n_K,\vecnX)} \\
					&\underset{(*)}{=}   \sum_{\tilde n_K,\vecnX} \Prindep(\bm{\tilde n_K} = \tilde n_K,\bm{\vecnX} = \vecnX)\Prdep\left(\bm{n_\mathbf{ph}} > \max_{n \in \mathcal{N}_{\leq}(\tilde n_K)} \Findep(\vecnX,n) +\tilde n_K \big(  \deltaone + \gamma^{\epsATb}_{\mathrm{bin}}(\tilde n_K,\deltaone) \big) \right)_{ | \Omega(\tilde n_K,\vecnX)} \\
					&\underset{\cref{eq2:lemma3_step2}}{\leq} \sum_{\tilde n_K,\vecnX} \Prindep(\bm{\tilde n_K} = \tilde n_K,\bm{\vecnX} = \vecnX) \bigg[ \Prindep\left(\bm{n_\mathbf{ph}} > \max_{n \in \mathcal{N}_{\leq}(\tilde n_K)} \Findep(\vecnX,n)\right)_{ | \Omega(\tilde n_K,\vecnX)} +  \epsATb^2 \bigg]  \\
					&= \bigg[\sum_{\tilde n_K,\vecnX} \Prindep(\bm{\tilde n_K} = \tilde n_K,\bm{\vecnX} = \vecnX)\Prindep\left(\bm{n_\mathbf{ph}} > \max_{n \in \mathcal{N}_{\leq}(\tilde n_K)} \Findep(\vecnX,n)\right)_{ | \Omega(\tilde n_K,\vecnX)} \bigg] +  \epsATb^2 \\
					&= \Prindep\left(\bm{n_\mathbf{ph}} > \max_{n \in \mathcal{N}_{\leq}(\bm{\tilde n_K})} \Findep(\bm{\vecnX},n)\right)  +\epsATb^2 \\
					&\underset{\cref{eq2:guarantee_our_proof_modified}}{\leq} \epssrc^2 + \epsATb^2.
				\end{aligned}
			\end{equation}
			
			In the equality marked by an asterisk, we have used the fact that
			\begin{equation}
				\Prindep(\bm{\tilde n_K} = \tilde n_K,\bm{\vecnX} = \vecnX) = \Prdep(\bm{\tilde n_K} = \tilde n_K,\bm{\vecnX} = \vecnX),
			\end{equation}
			since the value of the random variables $\bm{\tilde n_K}$ and $\bm{\vecnX}$ is fixed by the end of Step (\ref*{listitem:test_rounds_meas}), and therefore cannot be affected by Bob's decision of whether to run $\Sindep$ or $\Sdep$ in Step (\ref*{listitem:key_rounds_meas}).
			
			The bound from \cref{eq2:lemma3_step4} depends on the value of $\bm{\tilde n_K}$, which is not observed in the protocol. We will now get rid of this dependence by using the fact that $\bm{n_K}$ is obtained by discarding a small number of rounds (due to basis-\textit{dependent} loss) from $\bm{\tilde n_K}$, and thus the two should be close. The rate at which this discarding occurs is upper bounded by $\deltatwo$ defined in \cref{eq:deltatwodefined}. Thus, applying \cite[Lemma 4]{tupkaryPhaseError2024} to the state of the rounds in $\mathcal{\tilde N}_{K}$ conditional on choosing scenario $\Sdep$ and on the event $\bm{\tilde n_K} = \tilde n_K$, we obtain the statement
			\begin{equation}
				\label{eq2:lemma4_1}
				\Prdep\Big(\bm{n_K} < \tilde n_K \big({1-\delta_2-\gamma^{\epsATc}_{\mathrm{bin}}(\tilde n_K, \delta_2)} \big)\Big)_{| \Omega(\tilde n_K)} \leq \epsATc^2.
			\end{equation}
			Using the fact that $\gamma^{\varepsilon}_{\mathrm{bin}}(x,y)$ is non-increasing with respect to $x$ and the fact that $\Prdep(\bm{n_K} >  \tilde n_K)_{| \Omega(\tilde n_K)} = 0$, we replace the  $\tilde{n}_K$ with $\bm{n_K}$ and obtain
			\begin{equation}
				\label{eq2:lemma4_2}
				\begin{aligned}
					&\Prdep\Big(\bm{n_K} < \bm{\tilde n_K} \big({1-\delta_2-\gamma^{\epsATc}_{\mathrm{bin}}(\bm{n_K}, \delta_2)} \big)\Big) \\
					&= \sum_{\tilde n_K} \Prdep(\bm{\tilde n_K} = \tilde n_K) \Prdep\Big(\bm{n_K} < \tilde n_K \big({1-\delta_2-\gamma^{\epsATc}_{\mathrm{bin}}(\bm{n_K}, \delta_2)} \big)\Big)_{| \Omega(\tilde n_K)} \\ 
					&\leq \sum_{\tilde n_K} \Prdep(\bm{\tilde n_K} = \tilde n_K) \Prdep\Big(\bm{n_K} < \tilde n_K \big({1-\delta_2-\gamma^{\epsATc}_{\mathrm{bin}}(\tilde n_K, \delta_2)} \big)\Big)_{| \Omega(\tilde n_K)} \\
					&\underset{{\cref{eq2:lemma4_1}}}{\leq} \sum_{\tilde n_K} \Prdep(\bm{\tilde n_K} = \tilde n_K) \, \epsATc^2 \\
					&= \epsATc^2.
				\end{aligned}
			\end{equation}
			Using \cref{eq2:lemma4_2}, the fact that $\bm{\tilde n_K} \geq \bm{n_K}$, and the fact that $\bm{n_K}$ and $\bm{\tilde n_K}$ can only take integer values, we obtain
			\begin{equation}
				\label{eq2:lemma4_3}
				\begin{aligned}
					&\Prdep\Big(\bm{\tilde n_K} \notin \mathcal{V}_{\delta_2}(\bm{n_K}) \Big) \leq  \epsATc^2,
				\end{aligned}
			\end{equation}
			where \begin{equation}
				\mathcal{V}_{\delta_2} (m) \coloneqq \left\{m,m+1, \ldots, \left \lfloor \frac{m}{1-\delta_2-\gamma^{\epsATc}_{\mathrm{bin}}(m, \delta_2)} \right \rfloor\right\}.
			\end{equation}
			
			Using this, we obtain the following bound which can be computed from the observed statistics
			\begin{equation}
				\label{eq:almost_end_proof}
				\begin{aligned}
					&\Prdep \left(\bm\nph > \max_{\tilde n \in \mathcal{V}_{\delta_2}(\bm{n_K})} \left[\max_{n \in \mathcal{N}_{\leq}(\tilde n)}  \Findep (\bm\vecnX,n)  + \tilde n (\delta_1 + \gamma^{\epsATb}_{\mathrm{bin}}(\bm{n_K}, \deltaone))\right] \right) \\
					&\leq \Prdep \left(\bm\nph > \max_{\tilde n \in \mathcal{V}_{\delta_2}(\bm{n_K})} \left[\max_{n \in \mathcal{N}_{\leq}(\tilde n)}  \Findep (\bm\vecnX,n)  + \tilde n (\delta_1 + \gamma^{\epsATb}_{\mathrm{bin}}(\bm{n_K}, \deltaone))\right] \cap \bm{\tilde n_K} \in \mathcal{V}_{\delta_2}(\bm{n_K})  \right)
					+ \Prdep \left(\bm{\tilde n_K} \notin \mathcal{V}_{\delta_2}(\bm{n_K}) \right) \\
					&\underset{{\cref{eq2:lemma4_3}}}{\leq}  \Prdep \left(\bm\nph > \max_{\tilde n \in \mathcal{V}_{\delta_2}(\bm{n_K})} \left[\max_{n \in \mathcal{N}_{\leq}(\tilde n)}  \Findep (\bm\vecnX,n)  + \tilde n (\delta_1 + \gamma^{\epsATb}_{\mathrm{bin}}(\bm{n_K}, \deltaone))\right] \cap \bm{\tilde n_K} \in \mathcal{V}_{\delta_2}(\bm{n_K})  \right)
					+ \epsATc^2
					\\
					&\leq \Prdep \left(\bm\nph > \max_{n \in \mathcal{N}_{\leq}(\bm{\tilde n_K})}  \Findep (\bm\vecnX,n)  + \bm{\tilde n_K} (\delta_1 + \gamma^{\epsATb}_{\mathrm{bin}}(\bm{n_K}, \deltaone))\right)
					+ \epsATc^2 \\
					&\underset{{\cref{eq2:lemma3_step4}}}{\leq} \epssrc^2 + \epsATb^2 + \epsATc^2.
				\end{aligned}
			\end{equation}
			
			Finally, the double optimization in the above bound can be simplified to obtain
			\begin{equation}
				\label{eq:end_proof}
				\begin{aligned}
					&\Prdep \left(\bm\nph > \max_{n \in \mathcal{W}_{\delta_2}(\bm{n_K})}  \Findep (\bm\vecnX,n)  + \frac{\bm{n_K} \left(\delta_1 + \gamma^{\epsATb}_{\mathrm{bin}}(\bm{n_K}, \deltaone)\right)}{1-\delta_2-\gamma^{\epsATc}_{\mathrm{bin}}(\bm{n_K}, \delta_2)} \right) \\
					&= \Prdep \left(\bm\nph > \max_{\tilde n \in \mathcal{V}_{\delta_2}(\bm{n_K})} \max_{n \in \mathcal{N}_{\leq}(\tilde n)}  \Findep (\bm\vecnX,n)  + \frac{\bm{n_K} \left(\delta_1 + \gamma^{\epsATb}_{\mathrm{bin}}(\bm{n_K}, \deltaone)\right)}{1-\delta_2-\gamma^{\epsATc}_{\mathrm{bin}}(\bm{n_K}, \delta_2)} \right) \\
					&= \Prdep \left(\bm\nph > \left[\max_{\tilde n \in \mathcal{V}_{\delta_2}(\bm{n_K})} \, \max_{n \in \mathcal{N}_{\leq}(\tilde n)}  \Findep (\bm\vecnX,n)\right]  + \left[\max_{\tilde n \in \mathcal{V}_{\delta_2}(\bm{n_K})} \tilde n (\delta_1 + \gamma^{\epsATb}_{\mathrm{bin}}(\bm{n_K}, \deltaone)) \right]\right) \\
					&\leq \Prdep \left(\bm\nph > \max_{\tilde n \in \mathcal{V}_{\delta_2}(\bm{n_K})} \left[\max_{n \in \mathcal{N}_{\leq}(\tilde n)}  \Findep (\bm\vecnX,n)  + \tilde n (\delta_1 + \gamma^{\epsATb}_{\mathrm{bin}}(\bm{n_K}, \deltaone))\right] \right) \\
					&\underset{{\cref{eq:almost_end_proof}}}{\leq} \epssrc^2 + \epsATb^2 + \epsATc^2,
				\end{aligned}
			\end{equation}
			or equivalently,
			\begin{equation}
				\label{eq:end_proof_eph}
				\begin{aligned}
					&\Prdep \left(\bm\eph > \frac{\max_{n \in \mathcal{W}_{\delta_2}(\bm{n_K})}  n \, \Eindep (\bm\vecnX,n)}{\bm{n_K}}  + \frac{ \delta_1 + \gamma^{\epsATb}_{\mathrm{bin}}(\bm{n_K}, \deltaone)}{1-\delta_2-\gamma^{\epsATc}_{\mathrm{bin}}(\bm{n_K}, \delta_2)} \right) \\
					&= \Prdep \left(\bm\eph \bm{n_K} > \max_{n \in \mathcal{W}_{\delta_2}(\bm{n_K})}  \Findep (\bm\vecnX,n)  + \frac{\bm{n_K} \left(\delta_1 + \gamma^{\epsATb}_{\mathrm{bin}}(\bm{n_K}, \deltaone)\right)}{1-\delta_2-\gamma^{\epsATc}_{\mathrm{bin}}(\bm{n_K}, \delta_2)}\right) \\
					&= \Prdep \left(\bm\nph > \max_{n \in \mathcal{W}_{\delta_2}(\bm{n_K})}  \Findep (\bm\vecnX,n)  + \frac{\bm{n_K} \left(\delta_1 + \gamma^{\epsATb}_{\mathrm{bin}}(\bm{n_K}, \deltaone)\right)}{1-\delta_2-\gamma^{\epsATc}_{\mathrm{bin}}(\bm{n_K}, \delta_2)} \right) \\
					&\underset{\substack{\cref{eq:end_proof}}}{\leq} \epssrc^2 + \epsATb^2 + \epsATc^2,
				\end{aligned}
			\end{equation}
			as we wanted to prove.
		\end{proof}
		
		\begin{corollary} \label{maincorollary}
			Assume that \cref{eqapp:guarantee_BIDE} holds, and $\Findep(\bm{\vec n_\xpovm},\bm{n_K}) \coloneqq \bm{n_K} \Eindep(\bm{\vec n_\xpovm},\bm{n_K})$ is non-decreasing with respect to $\bm{n_{K}}$. Then, for any $\epsATb,\epsATc>0$,
			\begin{equation}\label{eq:manCorTighter}
				\begin{aligned}
					\Prdep \left(\bm\eph > \frac{\mathcal{E}_{0,0}\left(\bm{\vec n_\xpovm}, \left \lfloor\frac{\bm{n_K}}{1-\delta_2-\gamma^{\epsATc}_{\mathrm{bin}}(\bm{n_K}, \delta_2)}\right \rfloor \right) + \deltaone + \gamma^{\epsATb}_{\mathrm{bin}}(\bm{n_{K}}, \deltaone)}{1-\delta_2-\gamma^{\epsATc}_{\mathrm{bin}}(\bm{n_{K}}, \delta_2)} \right) \leq \epssrc^2 + \epsATb^2 + \epsATc^2.
				\end{aligned}
			\end{equation}
			If additionally, $ 	\Eindep(\bm{\vec n_\xpovm},\bm{n_K})$ is non-increasing with respect to $\bm{n_{K}}$, then
			\begin{equation} \label{eq:mainCorLooser}
				\Prdep \left(\bm\eph > \frac{\mathcal{E}_{0,0}\big(\bm{\vec n_\xpovm}, \bm{n_{K}}\big) + \deltaone + \gamma^{\epsATb}_{\mathrm{bin}}(\bm{n_{K}}, \deltaone)}{1-\delta_2-\gamma^{\epsATc}_{\mathrm{bin}}(\bm{n_{K}}, \delta_2)} \right) \leq \epssrc^2 + \epsATb^2 + \epsATc^2.
			\end{equation}
		\end{corollary}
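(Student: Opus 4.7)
The plan is to derive both bounds directly from \cref{thm:maintheorem} by simplifying the $\max_{n \in \mathcal{W}_{\delta_2}(\bm{n_K})} n \, \Eindep(\bm\vecnX, n)$ term using the assumed monotonicity and then transferring the probability estimate through an event-inclusion argument.

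For the tighter bound \cref{eq:manCorTighter}, I would first rewrite $n \, \Eindep(\bm\vecnX, n)$ as $\Findep(\bm\vecnX, n)$ and set $\bm{n^*} \coloneqq \lfloor \bm{n_K}/(1-\delta_2 - \gamma^{\epsATc}_{\mathrm{bin}}(\bm{n_K}, \delta_2)) \rfloor$, the largest element of $\mathcal{W}_{\delta_2}(\bm{n_K}) = \mathcal{N}_{\leq}(\bm{n^*})$. Since $\Findep$ is non-decreasing in its second argument by hypothesis, the maximum is attained at $\bm{n^*}$, giving $\max_{n \in \mathcal{W}_{\delta_2}(\bm{n_K})} n \, \Eindep(\bm\vecnX, n) = \bm{n^*} \, \Eindep(\bm\vecnX, \bm{n^*})$. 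The floor inequality yields $\bm{n^*}/\bm{n_K} \leq 1/(1-\delta_2 - \gamma^{\epsATc}_{\mathrm{bin}}(\bm{n_K}, \delta_2))$, so the first summand on the right-hand side of \cref{thm:maintheorem} is upper bounded by $\Eindep(\bm\vecnX, \bm{n^*})/(1-\delta_2 - \gamma^{\epsATc}_{\mathrm{bin}}(\bm{n_K}, \delta_2))$. Combining with the unchanged second summand collapses the two fractions into the single fraction appearing on the right-hand side of \cref{eq:manCorTighter}. Hence the event described in \cref{eq:manCorTighter} is contained in the event of \cref{thm:maintheorem}, and the probability bound $\epssrc^2 + \epsATb^2 + \epsATc^2$ transfers directly.

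For the looser bound \cref{eq:mainCorLooser}, I would then use the additional hypothesis that $\Eindep$ is non-increasing in its second argument. Since $1-\delta_2 - \gamma^{\epsATc}_{\mathrm{bin}}(\bm{n_K}, \delta_2) \leq 1$, we have $\bm{n^*} \geq \bm{n_K}$, so monotonicity gives $\Eindep(\bm\vecnX, \bm{n^*}) \leq \Eindep(\bm\vecnX, \bm{n_K})$. Substituting this into the numerator of \cref{eq:manCorTighter} can only enlarge the bound, so the event in \cref{eq:mainCorLooser} is contained in the event in \cref{eq:manCorTighter}, and the probability estimate carries over unchanged.

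I do not anticipate any substantial obstacle: the entire proof reduces to a two-step monotonicity argument combined with two event-inclusion passes. The only minor bookkeeping is to confirm that $1-\delta_2 - \gamma^{\epsATc}_{\mathrm{bin}}(\bm{n_K}, \delta_2)$ is positive so that $\bm{n^*}$ and the quoted upper bounds are well-defined; otherwise the claimed bound is vacuous and the statement holds trivially.
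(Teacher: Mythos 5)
Your proposal is correct and follows essentially the same route as the paper's proof: both identify the maximum over $\mathcal{W}_{\delta_2}(\bm{n_K})$ with $\Findep(\bm\vecnX,\bm{n^\star})$ via the monotonicity of $\Findep$, use $\bm{n^\star}/\bm{n_K}\leq 1/(1-\delta_2-\gamma^{\epsATc}_{\mathrm{bin}}(\bm{n_K},\delta_2))$ to collapse the two summands into one fraction, and then apply event inclusion, with the looser bound following from $\bm{n^\star}\geq\bm{n_K}$ and the non-increasing property of $\Eindep$. The only difference is your added remark on the positivity of the denominator, which the paper leaves implicit.
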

		
		\begin{proof}
			
			By assumption, $\Findep(\bm{\vec n_\xpovm},\bm{n_K})$ is non-decreasing with respect to $\bm{n_{K}}$, so
			\begin{equation}
				\begin{gathered}
					\label{eq5:corollary_0.5}
					\max_{n \in \mathcal{W}_{\delta_2} (\bm{n_K})} n \, \Eindep (\bm\vecnX,n) =  \max_{n \in \mathcal{W}_{\delta_2}(\bm{n_K})}  \Findep (\bm\vecnX,n) =   \Findep \left(\bm\vecnX,\max_{n \in \mathcal{W}_{\delta_2}(\bm{n_K})} n\right) \\=   \Findep \left(\bm\vecnX,  \left \lfloor\frac{\bm{n_K}}{1-\delta_2-\gamma^{\epsATc}_{\mathrm{bin}}(\bm{n_K}, \delta_2)}\right \rfloor\right) = \Findep (\bm\vecnX,\bm{n^\star}) = \bm{n^\star} \Eindep (\bm\vecnX,\bm{n^\star}) ,
				\end{gathered}
			\end{equation}
			where we have defined
			\begin{equation}
				\label{eq:n_star_def}
				\bm{n^\star} \coloneqq  \left \lfloor\frac{\bm{n_K}}{1-\delta_2-\gamma^{\epsATc}_{\mathrm{bin}}(\bm{n_K}, \delta_2)}\right \rfloor.   
			\end{equation}
			
			Therefore,
			\begin{equation}
				\label{eq5:corollary_1}
				\begin{aligned}
					&\Prdep\left(\bm{\eph} > \frac{\mathcal{E}_{0,0}\big(\bm{\vec n_\xpovm}, \bm{n^\star}\big) + \deltaone + \gamma^{\epsATb}_{\mathrm{bin}}(\bm{n_K}, \deltaone)}{1-\delta_2-\gamma^{\epsATc}_{\mathrm{bin}}(\bm{n_K}, \delta_2)}\right) \\
					& \underset{\substack{\cref{eq:n_star_def}}}{\leq} \Prdep \left(\bm\eph >  \frac{\bm{n^\star} \Eindep (\bm\vecnX,\bm{n^\star})}{\bm{n_K}}  + \frac{\delta_1 + \gamma^{\epsATb}_{\mathrm{bin}}(\bm{n_K}, \deltaone)}{1-\delta_2-\gamma^{\epsATc}_{\mathrm{bin}}(\bm{n_K}, \delta_2)} \right) \\
					&\underset{\substack{\cref{eq5:corollary_0.5}}}{\leq}    \Prdep \left(\bm\eph > \frac{\max_{n \in \mathcal{W}_{\delta_2}(\bm{n_K})}  \Findep (\bm\vecnX,n)}{\bm{n_K}}  + \frac{\delta_1 + \gamma^{\epsATb}_{\mathrm{bin}}(\bm{n_K}, \deltaone)}{1-\delta_2-\gamma^{\epsATc}_{\mathrm{bin}}(\bm{n_K}, \delta_2)} \right) \\
					&\underset{\substack{\cref{eq:end_proof_eph}}}{\leq} \epssrc^2 + \epsATb^2 + \epsATc^2.
				\end{aligned}   
			\end{equation}
			
			By assumption, $\Eindep(\bm{\vec n_\xpovm},\bm{n_K})$ is non-increasing with respect to $\bm{n_{K}}$, so
			\begin{equation}
				\label{eq5:corollary_1.5}
				\Eindep (\bm\vecnX,\bm{n^{\star}}) \leq  \Eindep (\bm\vecnX,\bm{n_K})
			\end{equation}
			where we have used $\bm{n^\star} \geq \bm{n_K}$. Therefore,
			\begin{equation}
				\label{eq5:final_bound_Nph}
				\begin{aligned}
					&\Prdep\left(\bm{\eph} > \frac{\mathcal{E}_{0,0}\big(\bm{\vec n_\xpovm}, \bm{n_K}\big) + \deltaone + \gamma^{\epsATb}_{\mathrm{bin}}(\bm{n_K}, \deltaone)}{1-\delta_2-\gamma^{\epsATc}_{\mathrm{bin}}(\bm{n_K}, \delta_2)}\right) \\
					& \underset{{\cref{eq5:corollary_1.5}}}{\leq}  \Prdep\left(\bm{\eph} > \frac{\mathcal{E}_{0,0}\big(\bm{\vec n_\xpovm}, \bm{n^\star}\big) + \deltaone + \gamma^{\epsATb}_{\mathrm{bin}}(\bm{n_K}, \deltaone)}{1-\delta_2-\gamma^{\epsATc}_{\mathrm{bin}}(\bm{n_K}, \delta_2)}\right) \\
					&\underset{\substack{\cref{eq5:corollary_1}}}{\leq} \epssrc^2 + \epsATb^2 + \epsATc^2,
				\end{aligned}
			\end{equation}
			as we wanted to prove.
		\end{proof}
	\end{widetext}
	
	\begin{remark}
		Note that, in the main text, $\Prindep$ and $\Prdep$ are simply defined as the probability measure of the phase-error estimation protocol of a scenario satisfying basis-independent detection efficiency ($\Sindep$) and a scenario with a detection efficiency mismatch parameterized by $\delta_1$ and $\delta_2$ ($\Sdep$), respectively. However, here, as a tool to prove our claim, we have embedded both scenarios into a single global scenario in which Bob makes an active decision whether to run $\Sindep$ or $\Sdep$. This global scenario is described earlier in \cref{app:proof_main_result}. Therefore, in the context of our proof, $\Prindep$ and $\Prdep$ are defined as conditional probability measures conditional on Bob's choice. Of course, every bound derived for these conditional measures holds verbatim for the scenarios considered in the main text. This is the reason why our result can be used as a tool that takes a phase-error bound derived for a scenario with basis-independent detection efficiency and extends it to a scenario with a detection efficiency mismatch, as explained in the main text.
	\end{remark}
	
	\begin{remark}
		If the source is assumed to emit ideal BB84 states, then one can easily finds a bound on the phase-error rate in the basis-independent detection efficiency scenario by applying Serfling's inequality. In particular, as shown in \cite{tupkaryPhaseError2024}, \cref{eqapp:guarantee_BIDE} holds if one sets
		\begin{equation}
			\label{eq:Eindep_perfect_source}
			\Eindep\big((\bm{e_X},\bm{n_X}),\bm{n_K}\big) = \bm{e_X} + \gamma_{\mathrm{serf}}(\bm{n_X},\bm{n_K}),
		\end{equation}
		with
		\begin{equation}
			\gamma_{\mathrm{serf}}(\bm{n_X},\bm{n_K}) = \sqrt{\frac{\ln(1/\epssrc^2) \cdot (\bm{n_K} + \bm{n_X})(\bm{n_X} + 1)}{\bm{n_K} \cdot \bm{n_X}^2}},
		\end{equation}
		where $\bm{n_X}$ is the total number of rounds in which Alice and Bob announce the $X$ basis, and $\bm{e_X}$ is the observed bit-error rate within these rounds. Then if one substitutes \cref{eq:Eindep_perfect_source} into \cref{eq:mainCorLooser}, one obtains exactly the same result as in \cite[Eq. (36)]{tupkaryPhaseError2024}. However, as noted in the corollary, one could substitute \cref{eq:Eindep_perfect_source} into \cref{eq:manCorTighter} instead to obtain a slightly tighter result.
	\end{remark}
	
	\section{Proof of our result for generalized decoy-state scenarios}
	\label{app:decoy_state_scenarios}
	
	As explained in the End Matter, our result can be applied to extend phase-error-based security proofs for generalized decoy-state scenarios to cover detection efficiency mismatches. Here, we prove the specific technical results that allow this application. 

\begin{widetext}
	\begin{theorem} \label{thm:maintheoremdecoy}
		Let $\Sindep$ be the phase-error-estimation protocol of a particular generalized decoy-state scenario when Bob's measurement setup satisfies the basis-independent detection efficiency condition, and let $\Sdep$ be the phase-error estimation protocol of the same generalized decoy-state scenario when Bob's measurement setup suffers from a detection efficiency mismatch parameterized by $\delta_1$ and $\delta_2$, defined in \cref{eq:deltaonedefined,eq:deltatwodefined}. Consider a global scenario like that defined in \cref{app:proof_main_result}, in which Bob makes an active decision whether to run $\Sindep$ or $\Sdep$. Let $\Prindep$ ($\Prdep$) be the probability measure conditional on the choice of $\Sindep$ ($\Sdep$). Suppose that  
		\begin{equation}
			\label{eq:guarantee_nK1_decoy}
			\Prindep \left[\bm{n_{K,1}} < \mathcal{M} (\bm{\vec n_\zpovm}) \right]\leq \epssp^2,
		\end{equation}
		and
		\begin{equation}
			\label{eq2:guarantee_BIDE_decoy}
			\Prindep(\bm{e_{\mathrm{ph},1}} > 	\Eindep^\mathrm{decoy}(\bm{\vec n_\xpovm},\bm{n_{K,1}})) \leq \epssrc^2,
		\end{equation} 
		Then, for any $\epsATb$ and $\epsATc$,
		\begin{equation} \label{eq:objective_combined}
			\begin{aligned}
				&\Prdep \left( \bm{n_{K,1}} < \mathcal{M} (\bm{\vec n_\zpovm}) \bigcup \bm{e_{\mathrm{ph},1}} > \max_{\mathcal{M} (\bm{\vec n_\zpovm})\leq \hat{n}\leq\bm{n_{K}}}\frac{\max_{n \in \mathcal{W}_{\delta_2}(\hat{n})}  n \, \Eindepdecoy (\bm\vecnX,n)}{\hat{n}}  + \frac{ \delta_1 + \gamma^{\epsATb}_{\mathrm{bin}}(\mathcal{M} (\bm{\vec n_\zpovm}), \deltaone)}{1-\delta_2-\gamma^{\epsATc}_{\mathrm{bin}}(\mathcal{M} (\bm{\vec n_\zpovm}), \delta_2)} \right)\\
				&\leq \epssp^2 + \epssrc^2 + \epsATb^2 + \epsATc^2,
			\end{aligned}
		\end{equation}
		where $\gamma_\mathrm{bin}$ is a finite-size deviation term defined in \cref{eq:gammabin_defined}.
	\end{theorem}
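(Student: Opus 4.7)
The plan is to mirror the proof of \cref{thm:maintheorem} with single-photon tagged counts in place of the full key-round counts, and then to use a union bound with the decoy-state guarantee \cref{eq:guarantee_nK1_decoy} to replace the unobserved $\bm{n_{K,1}}$ by observable quantities. As the first step, I would embed $\Sindep$ and $\Sdep$ into one global scenario as in \cref{app:proof_main_result}, adding a step in which Alice measures the ancilla registers $M_1^N$ to tag each key round by its quasi-photon number $m$. This defines $\bm{\tilde n_{K,1}}$, the number of single-photon key rounds before Bob's basis-dependent filter in step 6, and $\bm{n_{K,1}}$, the analogous count afterwards; by construction $\bm{\tilde n_{K,1}}$, $\bm{\vec n_\zpovm}$, and $\bm{\vec n_\xpovm}$ are fixed before Bob's scenario choice, so they have identical joint distributions under $\Prindep$ and $\Prdep$.

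Next, I would repeat the chain of inequalities from \cref{eq2:guarantee_nph} through \cref{eq:end_proof_eph} on the restricted ensemble of single-photon tagged key rounds, with $(\bm{n_K},\bm{\tilde n_K},\bm{n_\mathrm{ph}},\Eindep,\Findep)$ replaced by $(\bm{n_{K,1}},\bm{\tilde n_{K,1}},\bm{n_{\mathrm{ph},1}},\Eindepdecoy,\Findepdecoy)$. Lemmas 3 and 4 of \cite{tupkaryPhaseError2024} still apply to the restricted ensemble, so the filter-swap and filter-loss arguments of \cref{thm:maintheorem} carry over unchanged and yield an $\Sdep$-bound on $\bm{n_{\mathrm{ph},1}}$ parameterized by $\bm{n_{K,1}}$ and $\bm{\vec n_\xpovm}$, with failure probability at most $\epssrc^2 + \epsATb^2 + \epsATc^2$. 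Dividing through by $\bm{n_{K,1}}$ gives an upper bound on $\bm{e_{\mathrm{ph},1}}$ of the form $g(\bm{n_{K,1}},\bm{\vec n_\xpovm})/\bm{n_{K,1}}$ for an explicit function $g$ matching the summands of the inner bracket in \cref{eq:objective_combined}.

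To eliminate the unobserved $\bm{n_{K,1}}$, I would combine \cref{eq:guarantee_nK1_decoy} (transferred to $\Sdep$ via the scenario-independence of $\bm{\vec n_\zpovm}$ and $\bm{n_{K,1}}$ in the global embedding) with the deterministic inclusion $\bm{n_{K,1}} \leq \bm{n_K}$ to confine $\bm{n_{K,1}}$ to the observable interval $[\mathcal{M}(\bm{\vec n_\zpovm}),\bm{n_K}]$, and then upper-bound $g(\bm{n_{K,1}},\bm{\vec n_\xpovm})/\bm{n_{K,1}}$ by its maximum over this interval. This monotonicity-free step is exactly what produces the outer $\max_{\hat n}$ of \cref{eq:objective_combined}, and a union bound over the phase-error failure event and the sifting failure event gives total probability $\epssp^2 + \epssrc^2 + \epsATb^2 + \epsATc^2$, as required.

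The main obstacle I anticipate is the transfer of \cref{eq:guarantee_nK1_decoy} from $\Sindep$ to $\Sdep$ with the same $\mathcal{M}$. This requires identifying $\bm{n_{K,1}}$ as a scenario-independent random variable in the global embedding, most naturally by regarding the decoy-state lower bound as a statement about the single-photon count determined by operations that commute with Bob's step-6 filter choice. Once this identification is in place, the interchange of maxima, the union bound, and the accounting of the finite-size parameters $\gamma^{\epsATb}_{\mathrm{bin}}$ and $\gamma^{\epsATc}_{\mathrm{bin}}$ are all routine adaptations of \cref{thm:maintheorem}.
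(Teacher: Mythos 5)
Your proposal follows essentially the same route as the paper: transfer the single-photon count bound \cref{eq:guarantee_nK1_decoy} to $\Prdep$, rerun the argument of \cref{thm:maintheorem} on the $m=1$-tagged key rounds to obtain \cref{eq:err_decoy_extension}, and then take a union bound, confining $\bm{n_{K,1}}$ to $[\mathcal{M}(\bm{\vec n_\zpovm}),\bm{n_K}]$ and using that $\gamma^{\varepsilon}_{\mathrm{bin}}$ is non-increasing in its first argument. The only divergence is your justification for the transfer step: since $\bm{n_{K,1}}$ is a post-step-6-filter count it is not literally scenario-independent (under $\Sindep$ the filter is $F_\xpovm$, under $\Sdep$ it is $F_\zpovm$), and the paper instead argues that \cref{eq:guarantee_nK1_decoy} holds under $\Prdep$ because its derivation uses only the outcomes of the $\zpovm$ POVM rounds and never invokes the basis-independent detection efficiency condition --- a nuance worth noting, though both justifications are left informal.
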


	\begin{proof}
		First, note that \cref{eq:guarantee_nK1_decoy,} is a statement that depends only on the outcomes of the $\zpovm$ POVM rounds, and therefore it must hold even if basis-independent detection efficiency does not hold. That is, if \cref{eq:guarantee_nK1_decoy} holds, then
		\begin{equation}
			\label{eq:nK1_decoy_extension}
			\Prdep \left[\bm{n_{K,1}} < \mathcal{M} (\bm{\vec n_\zpovm}) \right]\leq \epssp^2,
		\end{equation}
		also holds.
		The proof that \cref{eq2:guarantee_BIDE_decoy} implies
		\begin{equation} \label{eq:err_decoy_extension}
			\Prdep \left(\bm{e_{\mathrm{ph},1}} > \frac{\max_{n \in \mathcal{W}_{\delta_2}(\bm{n_{K,1}})}  n \, \Eindepdecoy (\bm\vecnX,n)}{\bm{n_{K,1}}}  + \frac{ \delta_1 + \gamma^{\epsATb}_{\mathrm{bin}}(\bm{n_{K,1}}, \deltaone)}{1-\delta_2-\gamma^{\epsATc}_{\mathrm{bin}}(\bm{n_{K,1}}, \delta_2)}\right) \leq \epssrc^2 + \epsATb^2 + \epsATc^2,
		\end{equation}
		is essentially the same as the proof of \cref{thm:maintheorem}. Finally, using the union bound on \cref{eq:nK1_decoy_extension,eq:err_decoy_extension} with the fact that $\bm{n_{K,1}}\leq \bm{n_{K}}$ and that $\gamma^{\varepsilon}_{\mathrm{bin}}(n, \delta)$ is a non-increasing function of $n$ completes the proof.
	\end{proof}
	
	As in the single-photon case, we expect that, in most security proofs, the function $\Eindepdecoy (\bm\vecnX,\bm{n_{K,1}})$ satisfies certain monotonicity conditions. In this case, the result can be simplified a lot, as shown below.
	
	\begin{corollary}
		Consider the same as in \cref{thm:maintheoremdecoy}, and assume that the function $\Findepdecoy (\bm\vecnX,\bm{n_{K,1}}) \coloneqq \bm{n_{K,1}} \Eindepdecoy (\bm\vecnX,\bm{n_{K,1}})$ is non-decreasing with respect to $\bm{n_{K,1}}$, and that $\Eindepdecoy (\bm\vecnX,\bm{n_{K,1}})$ is non-increasing with respect to $\bm{n_{K,1}}$. Then,
		\begin{equation} \label{eq:objective_combined_3}
			\begin{aligned}
				&\Prdep \left(\bm{n_{K,1}} < \mathcal{M} (\bm{\vec n_\zpovm}) \bigcup \bm{e_{\mathrm{ph},1}} > \frac{\Eindepdecoy \left(\bm\vecnX,\left \lfloor\frac{\mathcal{M} (\bm{\vec n_\zpovm})}{1-\delta_2-\gamma^{\epsATc}_{\mathrm{bin}}\big(\mathcal{M} (\bm{\vec n_\zpovm}), \delta_2\big)}\right \rfloor\right) + \delta_1 + \gamma^{\epsATb}_{\mathrm{bin}}(\mathcal{M} (\bm{\vec n_\zpovm}), \deltaone)}{1-\delta_2-\gamma^{\epsATc}_{\mathrm{bin}}\big(\mathcal{M} (\bm{\vec n_\zpovm}), \delta_2\big)}\right)\\
				&\leq \epssrc^2 + \epsATb^2 + \epsATc^2 + \epssp^2.
			\end{aligned}
		\end{equation}
		
		Also,
		\begin{equation} \label{eq:objective_combined_4}
			\begin{aligned}
				&\Prdep \left(\bm{n_{K,1}} < \mathcal{M} (\bm{\vec n_\zpovm}) \bigcup \bm{e_{\mathrm{ph},1}} > \frac{\Eindepdecoy \left(\bm\vecnX,\mathcal{M} (\bm{\vec n_\zpovm})\right) + \delta_1 + \gamma^{\epsATb}_{\mathrm{bin}}(\mathcal{M} (\bm{\vec n_\zpovm}), \deltaone)}{1-\delta_2-\gamma^{\epsATc}_{\mathrm{bin}}\big(\mathcal{M} (\bm{\vec n_\zpovm}), \delta_2\big)}\right)\\
				&\leq \epssrc^2 + \epsATb^2 + \epsATc^2 + \epssp^2,
			\end{aligned}
		\end{equation}
		which is a simpler but slightly less tight bound.
	\end{corollary}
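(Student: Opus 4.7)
The plan is to start from the conclusion of \cref{thm:maintheoremdecoy}, i.e., \cref{eq:objective_combined}, and simplify only the threshold on $\bm{e_{\mathrm{ph},1}}$ inside the union; the companion event $\bm{n_{K,1}} < \mathcal{M}(\bm\vecnZ)$ is kept unchanged. Concretely, the threshold is a double optimization, $\max_{\hat n \in \{\mathcal{M}(\bm\vecnZ), \ldots, \bm{n_K}\}} \max_{n \in \mathcal{W}_{\delta_2}(\hat n)} n\, \Eindepdecoy(\bm\vecnX, n)/\hat n$, plus an additive term that already depends only on $\mathcal{M}(\bm\vecnZ)$. Since enlarging the threshold only shrinks the event and hence its probability, it suffices to upper bound this double maximum by the closed-form expression appearing in \cref{eq:objective_combined_3}. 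The strategy mirrors \cref{maincorollary} for the single-photon case: non-decreasing $\Findepdecoy$ controls the inner max over $n$, while non-increasing $\Eindepdecoy$ together with the non-increasing behavior of $\gamma^{\epsATc}_{\mathrm{bin}}$ in its first argument controls the outer max over $\hat n$.

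The key step is to establish the uniform bound
\begin{equation*}
\frac{n\, \Eindepdecoy(\bm\vecnX, n)}{\hat n} \leq \frac{\Eindepdecoy(\bm\vecnX, n^\star(\mathcal{M}(\bm\vecnZ)))}{1-\delta_2-\gamma^{\epsATc}_{\mathrm{bin}}(\mathcal{M}(\bm\vecnZ), \delta_2)}
\end{equation*}
for every $\hat n \in \{\mathcal{M}(\bm\vecnZ), \ldots, \bm{n_K}\}$ and $n \in \mathcal{W}_{\delta_2}(\hat n)$, where $n^\star(m) := \lfloor m/(1-\delta_2-\gamma^{\epsATc}_{\mathrm{bin}}(m, \delta_2)) \rfloor$. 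I would carry this out by splitting on whether $n \leq n^\star(\mathcal{M}(\bm\vecnZ))$ or not. If $n \leq n^\star(\mathcal{M}(\bm\vecnZ))$, non-decreasing $\Findepdecoy$ gives $n\, \Eindepdecoy(\bm\vecnX, n) \leq \Findepdecoy(\bm\vecnX, n^\star(\mathcal{M}(\bm\vecnZ)))$; dividing by $\hat n \geq \mathcal{M}(\bm\vecnZ)$ and using $n^\star(\mathcal{M}(\bm\vecnZ))/\mathcal{M}(\bm\vecnZ) \leq 1/(1-\delta_2-\gamma^{\epsATc}_{\mathrm{bin}}(\mathcal{M}(\bm\vecnZ), \delta_2))$ (a direct consequence of the definition of $n^\star$) yields the target bound. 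If $n > n^\star(\mathcal{M}(\bm\vecnZ))$, non-increasing $\Eindepdecoy$ gives $n\, \Eindepdecoy(\bm\vecnX, n) \leq n\, \Eindepdecoy(\bm\vecnX, n^\star(\mathcal{M}(\bm\vecnZ)))$; combining $n \leq n^\star(\hat n)$ with $n^\star(\hat n)/\hat n \leq 1/(1-\delta_2-\gamma^{\epsATc}_{\mathrm{bin}}(\hat n, \delta_2))$ and the non-increasing dependence of $\gamma^{\epsATc}_{\mathrm{bin}}(\cdot, \delta_2)$ on its first argument (a standard Chernoff-type fact from \cref{eq:gammabin_defined}) then produces $n/\hat n \leq 1/(1-\delta_2-\gamma^{\epsATc}_{\mathrm{bin}}(\mathcal{M}(\bm\vecnZ), \delta_2))$, giving the same target. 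Substituting this uniform bound into \cref{eq:objective_combined} delivers \cref{eq:objective_combined_3}.

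The simpler bound \cref{eq:objective_combined_4} then follows from \cref{eq:objective_combined_3} by one further monotonicity step: since $n^\star(\mathcal{M}(\bm\vecnZ)) \geq \mathcal{M}(\bm\vecnZ)$ by construction and $\Eindepdecoy(\bm\vecnX, \cdot)$ is non-increasing, we have $\Eindepdecoy(\bm\vecnX, n^\star(\mathcal{M}(\bm\vecnZ))) \leq \Eindepdecoy(\bm\vecnX, \mathcal{M}(\bm\vecnZ))$, which loosens the bound to the stated form. The main obstacle I anticipate is handling the outer optimization over $\hat n$ cleanly: naively one would try to argue that the integrand is non-increasing in $\hat n$, but this would require $n^\star(\hat n)$ to be non-decreasing in $\hat n$, which is not automatic and can fail when $\gamma^{\epsATc}_{\mathrm{bin}}(\cdot, \delta_2)$ varies rapidly. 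The two-case argument above sidesteps this by bounding the integrand pointwise in $\hat n$ by a quantity involving only $\mathcal{M}(\bm\vecnZ)$.
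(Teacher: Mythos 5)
Your proposal is correct, and it takes a genuinely different (and in one respect more careful) route than the paper. The paper's proof first collapses the inner maximum for each fixed $\hat n$ using the single-photon corollary's derivation (monotonicity of $\Findepdecoy$), arriving at an intermediate bound with an outer maximum over $\hat n$, and then removes that outer maximum by rewriting the claimed threshold as $\Eindepdecoy\big(\bm\vecnX,\min_{\hat n} \lfloor \hat n/(1-\delta_2-\gamma^{\epsATc}_{\mathrm{bin}}(\hat n,\delta_2))\rfloor\big)$ divided by $\min_{\hat n}\big(1-\delta_2-\gamma^{\epsATc}_{\mathrm{bin}}(\hat n,\delta_2)\big)$ and pulling the maximum inside; the first step of that chain is stated as an equality and implicitly uses that $\min_{\mathcal{M}(\bm{\vec n_\zpovm})\leq\hat n\leq \bm{n_K}} \lfloor \hat n/(1-\delta_2-\gamma^{\epsATc}_{\mathrm{bin}}(\hat n,\delta_2))\rfloor$ is attained at $\hat n=\mathcal{M}(\bm{\vec n_\zpovm})$ --- precisely the monotonicity of $n^\star(\cdot)$ that you correctly observe is not automatic from the stated hypotheses. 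Your pointwise two-case bound on $n\,\Eindepdecoy(\bm\vecnX,n)/\hat n$ sidesteps this entirely: for $n\leq n^\star(\mathcal{M}(\bm{\vec n_\zpovm}))$ you invoke monotonicity of $\Findepdecoy$ and $\hat n\geq \mathcal{M}(\bm{\vec n_\zpovm})$, and for $n> n^\star(\mathcal{M}(\bm{\vec n_\zpovm}))$ you invoke monotonicity of $\Eindepdecoy$ together with $n/\hat n\leq 1/\big(1-\delta_2-\gamma^{\epsATc}_{\mathrm{bin}}(\hat n,\delta_2)\big)\leq 1/\big(1-\delta_2-\gamma^{\epsATc}_{\mathrm{bin}}(\mathcal{M}(\bm{\vec n_\zpovm}),\delta_2)\big)$, so you never need to compare $n^\star(\hat n)$ with $n^\star(\mathcal{M}(\bm{\vec n_\zpovm}))$. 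The only ingredient worth making explicit is the non-negativity $\Eindepdecoy(\bm\vecnX,n)\geq 0$, which you use when multiplying the ratio inequalities by $\Eindepdecoy$; this follows from $\Findepdecoy$ being non-decreasing with $\Findepdecoy(\bm\vecnX,0)=0$ (or simply from $\Eindepdecoy$ being a bound on a rate), and the paper makes the same implicit assumption. Your derivation of \cref{eq:objective_combined_4} from \cref{eq:objective_combined_3} matches the paper's. Overall, both approaches yield the stated bounds; yours is more robust because it does not rely on the unproven monotonicity of $\hat n\mapsto n^\star(\hat n)$.
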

	
	\begin{proof}
		If $\Findepdecoy (\bm\vecnX,\bm{n_{K,1}})$ is non-decreasing with respect to $\bm{n_{K,1}}$, by applying to \cref{eq:objective_combined} exactly the same derivations as in \cref{eq5:corollary_0.5,eq:n_star_def,eq5:corollary_1,eq5:corollary_1.5}, we obtain
		\begin{equation} \label{eq:objective_combined_2}
			\begin{aligned}
				&\Prdep \left(\bm{n_{K,1}} < \mathcal{M} (\bm{\vec n_\zpovm}) \bigcup \bm{e_{\mathrm{ph},1}} > \max_{\mathcal{M} (\bm{\vec n_\zpovm})\leq \hat{n}\leq\bm{n_{K}}}\frac{\Eindepdecoy \left(\bm\vecnX,\left \lfloor\frac{\hat{n}}{1-\delta_2-\gamma^{\epsATc}_{\mathrm{bin}}(\hat{n}, \delta_2)}\right \rfloor\right)}{1-\delta_2-\gamma^{\epsATc}_{\mathrm{bin}}(\hat{n}, \delta_2)}  + \frac{ \delta_1 + \gamma^{\epsATb}_{\mathrm{bin}}(\mathcal{M} (\bm{\vec n_\zpovm}), \deltaone)}{1-\delta_2-\gamma^{\epsATc}_{\mathrm{bin}}(\mathcal{M} (\bm{\vec n_\zpovm}), \delta_2)}\right)\\
				&\leq \epssrc^2 + \epsATb^2 + \epsATc^2 + \epssp^2,
			\end{aligned}
		\end{equation}
		
		Then, if $\Eindepdecoy (\bm\vecnX,\bm{n_{K,1}})$ is non-increasing with respect to $\bm{n_{K,1}}$, we have that
		{\small       \begin{equation} \label{eq:objective_combined_2.1}
				\begin{aligned}
					&\Prdep \left(\bm{n_{K,1}} < \mathcal{M} (\bm{\vec n_\zpovm}) \bigcup \bm{e_{\mathrm{ph},1}} > \frac{\Eindepdecoy \left(\bm\vecnX,\left \lfloor\frac{\mathcal{M} (\bm{\vec n_\zpovm})}{1-\delta_2-\gamma^{\epsATc}_{\mathrm{bin}}\big(\mathcal{M} (\bm{\vec n_\zpovm}), \delta_2\big)}\right \rfloor\right) + \delta_1 + \gamma^{\epsATb}_{\mathrm{bin}}(\mathcal{M} (\bm{\vec n_\zpovm}), \deltaone)}{1-\delta_2-\gamma^{\epsATc}_{\mathrm{bin}}\big(\mathcal{M} (\bm{\vec n_\zpovm}), \delta_2\big)}\right)\\
					&= \Prdep \left(\bm{n_{K,1}} < \mathcal{M} (\bm{\vec n_\zpovm}) \bigcup \bm{e_{\mathrm{ph},1}} > \frac{\Eindepdecoy \left(\bm\vecnX,\min_{\mathcal{M} (\bm{\vec n_\zpovm})\leq \hat{n}\leq\bm{n_{K}}} \left  \lfloor\frac{\hat{n}}{1-\delta_2-\gamma^{\epsATc}_{\mathrm{bin}}(\hat{n}, \delta_2)}\right \rfloor\right)}{\min_{\mathcal{M} (\bm{\vec n_\zpovm})\leq \hat{n}\leq\bm{n_{K}}} 1-\delta_2-\gamma^{\epsATc}_{\mathrm{bin}}(\hat{n}, \delta_2)}  + \frac{ \delta_1 + \gamma^{\epsATb}_{\mathrm{bin}}(\mathcal{M} (\bm{\vec n_\zpovm}), \deltaone)}{1-\delta_2-\gamma^{\epsATc}_{\mathrm{bin}}(\mathcal{M} (\bm{\vec n_\zpovm}), \delta_2)}\right)\\
					&\leq \Prdep \left(\bm{n_{K,1}} < \mathcal{M}  (\bm{\vec n_\zpovm}) \bigcup \bm{e_{\mathrm{ph},1}} > \frac{\max_{\mathcal{M} (\bm{\vec n_\zpovm})\leq \hat{n}\leq\bm{n_{K}}} \Eindepdecoy \left(\bm\vecnX,\left \lfloor\frac{\hat{n}}{1-\delta_2-\gamma^{\epsATc}_{\mathrm{bin}}(\hat{n}, \delta_2)}\right \rfloor\right)}{\min_{\mathcal{M} (\bm{\vec n_\zpovm})\leq \hat{n}\leq\bm{n_{K}}} 1-\delta_2-\gamma^{\epsATc}_{\mathrm{bin}}(\hat{n}, \delta_2)}  + \frac{ \delta_1 + \gamma^{\epsATb}_{\mathrm{bin}}(\mathcal{M} (\bm{\vec n_\zpovm}), \deltaone)}{1-\delta_2-\gamma^{\epsATc}_{\mathrm{bin}}(\mathcal{M} (\bm{\vec n_\zpovm}), \delta_2)} \right)\\
					&\leq \Prdep \left(\bm{n_{K,1}} < \mathcal{M} (\bm{\vec n_\zpovm}) \bigcup \bm{e_{\mathrm{ph},1}} > \max_{\mathcal{M} (\bm{\vec n_\zpovm})\leq \hat{n}\leq\bm{n_{K}}}\frac{\Eindepdecoy \left(\bm\vecnX,\left \lfloor\frac{\hat{n}}{1-\delta_2-\gamma^{\epsATc}_{\mathrm{bin}}(\hat{n}, \delta_2)}\right \rfloor\right)}{1-\delta_2-\gamma^{\epsATc}_{\mathrm{bin}}(\hat{n}, \delta_2)}  + \frac{ \delta_1 + \gamma^{\epsATb}_{\mathrm{bin}}(\mathcal{M} (\bm{\vec n_\zpovm}), \deltaone)}{1-\delta_2-\gamma^{\epsATc}_{\mathrm{bin}}(\mathcal{M} (\bm{\vec n_\zpovm}), \delta_2)}\right)\\
					&\underset{\substack{\cref{eq:objective_combined_2}}}{\leq} \epssrc^2 + \epsATb^2 + \epsATc^2 + \epssp^2,
				\end{aligned}
		\end{equation}}
		\!i.e., \cref{eq:objective_combined_3}. Using the again the fact that $\Eindepdecoy (\bm\vecnX,\bm{n_{K,1}})$ is non-increasing with respect to $\bm{n_{K,1}}$, we immediately recover \cref{eq:objective_combined_4} as well.
	\end{proof}
	
	As explained in the End Matter, any of the bounds in \cref{eq:objective_combined},  \cref{eq:objective_combined_3} or \cref{eq:objective_combined_4} can be directly used to determine the length of the final output key and its security parameter.
	
	\section{Proof of monotonicity of the phase-error bound in \cite{curras-lorenzoSecurityHighspeed2025}} \label{app:monotonicity}
		
		As explained in the main text, Ref.~\cite{curras-lorenzoSecurityHighspeed2025} derives a bound on the number of phase-errors of the form
		\begin{equation}
			\label{eq:nph_bound}
			\Pr(\bm{n_\mathbf{ph}} > 	\Findep(\bm{\vec n_\xpovm},\bm{n_K})) \leq \epssrc^2
		\end{equation}
		that incorporates source imperfections. In this section, we prove that the function $\Findep$ from \cite{curras-lorenzoSecurityHighspeed2025} is non-decreasing with respect to the number of key rounds $\bm{n_K}$, allowing us to apply \cref{maincorollary} to extend the phase-error bound to incorporate detector efficiency mismatches. To facilitate this proof, we introduce a slight modification to the original bound that makes it slightly (but strictly) tighter in the finite-key regime while preserving its asymptotic properties. Although the original bound in \cite{curras-lorenzoSecurityHighspeed2025} likely already satisfies the monotonicity condition, our modification not only improves the bound's tightness but also simplifies the monotonicity proof. To understand this section, the reader should be already familiar with the proof in \cite{curras-lorenzoSecurityHighspeed2025}.

		In \cite[Eq.~(S231)]{curras-lorenzoSecurityHighspeed2025}, the following inequality is derived:
		\begin{equation}
			\label{eq:finite_formula_3alt}
			\bm{\tilde{N}_{Z_C=0}^{\rm {err}}} \leq \bm{\tilde{N}_{Z_C=0}^{\rm det}} G_+ \bigg(\frac{\bm{\tilde{N}_{Z_C=1}^{\rm {err}}}}{\bm{\tilde{N}_{Z_C=1}^{\rm det}}},	1 - \frac{2 p_{Z_C} \bm{\tilde{N}_{X_C=1}^{\rm{{det}}}}}{p_{X_C} (\bm{\tilde{N}_{Z_C=0}^{\rm det}}+\bm{\tilde{N}_{Z_C=1}^{\rm det}})}\bigg),
		\end{equation}
		where $0 < p_{X_C} < 1$, and all terms $\bm{\tilde N}$ represent sums of conditional expectations of Bernoulli random variables (RVs), and are thus non-negative RVs. For consistency with the rest of our paper, we denote all RVs in bold, although this convention is not used in \cite{curras-lorenzoSecurityHighspeed2025}. Here, the function $G_{+}$ is defined as
		\begin{equation}
			\label{eq:funcG}
			G_+(y,z) =
			\begin{cases}
				y + (1-z^2)(1-2y) + 2\sqrt{z^2(1-z^2)y(1-y)}  & \quad \text{if } 0\leq y < z^2 \leq 1\text{ and } z > 0, \\
				1  & \quad \text{otherwise.}
			\end{cases}
		\end{equation}
		Note that the bound in \cref{eq:finite_formula_3alt} is only well-defined if $\bm{\tilde{N}_{Z_C=1}^{\rm det}} >0$; one could address this by simply defining the bound as \cref{eq:finite_formula_3alt} if $\bm{\tilde{N}_{Z_C=1}^{\rm det}} >0$ or the trivial $\bm{\tilde{N}_{Z_C=0}^{\rm {err}}} \leq \bm{\tilde{N}_{Z_C=0}^{\rm det}}$ if $\bm{\tilde{N}_{Z_C=1}^{\rm det}} =0$.
	
	The analysis then applies Azuma's inequality to replace sums of conditional expectations with sums of Bernoulli RVs themselves, yielding bounds of the form
		\begin{equation}
			\bm{\tilde N} \underset{\epsilon_A}{\leq} \bm{N} + \bm{\Delta_A} \qquad \bm{\tilde N} \underset{\epsilon_A}{\geq} \bm{N} - \bm{\Delta_A},
		\end{equation}
		where we are using the shorthand notation $A \underset{\epsilon}{\leq} B = \Pr[A>B] \leq \epsilon$. The original derivation applies Azuma's inequality to $\bm{\tilde{N}_{Z_C=0}^{\rm det}}+\bm{\tilde{N}_{Z_C=1}^{\rm det}}$, resulting in \cite[Eq.~(S233)]{curras-lorenzoSecurityHighspeed2025}:
		\begin{equation}
			\label{eq:old_formula}
			\bm{N_{Z_C=0}^{\rm {err}}}  \underset{6\epsilon_A}{\leq} (\bm{N_{Z_C=0}^{\rm det}} + \bm{\Delta_A}) \,   G_+ \bigg(\frac{\bm{N_{Z_C=1}^{\rm {err}}} + \bm{\Delta_A}}{\bm{N_{Z_C=1}^{\rm det}} - \bm{\Delta_A}},	1 - \frac{2 p_{Z_C} (\bm{N_{X_C=1}^{\rm{{det}}}} + \bm{\Delta_A})}{p_{X_C} (\bm{N_{Z_C=0}^{\rm det}} + \bm{N_{Z_C=1}^{\rm det}} - \bm{\Delta_A})}\bigg) + \bm{\Delta_A}.
		\end{equation}
		
		We now derive an improved bound by observing that the denominator in the second argument of $G_+$ can be tightened, that is,
		\begin{equation}
			\label{eq:new_formula}
			\bm{N_{Z_C=0}^{\rm {err}}}  \underset{6\epsilon_A}{\leq} (\bm{N_{Z_C=0}^{\rm det}} + \bm{\Delta_A}) \,   G_+ \bigg(\frac{\bm{N_{Z_C=1}^{\rm {err}}} + \bm{\Delta_A}}{\bm{N_{Z_C=1}^{\rm det}} - \bm{\Delta_A}},	1 - \frac{2 p_{Z_C} (\bm{N_{X_C=1}^{\rm{{det}}}} + \bm{\Delta_A})}{p_{X_C} (\bm{N_{Z_C=0}^{\rm det}} + \bm{\Delta_A} + \max(0,\bm{N_{Z_C=1}^{\rm det}} - \bm{\Delta_A}))}\bigg) + \bm{\Delta_A}.
		\end{equation}
		This bound is tighter since the RHS of Eq.~\eqref{eq:new_formula} is smaller to that of Eq.~\eqref{eq:old_formula}. Again, the RHS of \cref{eq:old_formula,eq:new_formula} are only well-defined if $\bm{N_{Z_C=1}^{\rm det}} - \bm{\Delta_A} \geq 0$; if not, one can trivially replace it by $\bm{N_{Z_C=0}^{\rm {det}}}$.
		
		Let us define the function
		\begin{equation}
			\label{eq:def_f}
			f(x;y,a,c,p)
			\;:=\;
			x\,G_{+}\!\Bigl(y,\;1-\tfrac{a}{p(x+c)}\Bigr),
			\qquad (x,y,a,c,p)\in\mathcal D.
		\end{equation}
		with domain
		\begin{equation} 
			\mathcal D
			:=\bigl\{(x,y,a,c,p)\in\mathbb R^{5}\,\bigm|\,
			x\geq 0,c\geq 0,a\geq 0,\; 0\leq y \leq 1,\;0<p<1\bigr\}.
		\end{equation}
		By identifying
		\begin{equation}
			x   = \bm{\tilde N_{Z_C=0}^{\rm det}}, \qquad
			y   = \tfrac{\bm{\tilde N_{Z_C=1}^{\rm err}}}{\bm{\tilde N_{Z_C=1}^{\rm det}}},\qquad
			a   = 2p_{Z_C}\bm{\tilde N_{X_C=1}^{\rm det}},\qquad
			c   = \bm{\tilde N_{Z_C=1}^{\rm det}},\qquad
			p   = p_{X_C},
		\end{equation}
		we can rewrite the bound in \cref{eq:finite_formula_3alt} as
		\begin{equation}
			\label{eq:finite_formula_3alt_f}
			\bm{\tilde N_{Z_C=0}^{\rm err}}
			\;\le\;
			f\bigl(x;y,a,c,p\bigr).
		\end{equation}
		The function $f(x;y,a,c,p)$ is non-decreasing in $x$, as proven in \cref{lem:monotonicity} in \cref{subapp:proof_monotonicity}. Using this and the fact that $G_+(y,z)$ is non-decreasing with $y$ and non-increasing with $z$, one can directly obtain \cref{eq:new_formula}.

		

		In \cite[Eq.~(S116)]{curras-lorenzoSecurityHighspeed2025}, it is shown that, by applying the Chernoff bound and Hoeffding bound to \cref{eq:old_formula}, one can obtain the following bound on the number of phase errors $\bm{\nph}$ (called $N_{\rm ph}$ in \cite{curras-lorenzoSecurityHighspeed2025}):
		\begin{equation}
			\label{eq:Nph_bound_finite_old}
			\begin{aligned}
				\bm{\nph} &\underset{\epsilon_{\rm tot}}{\leq} \frac{1}{p_{Z_C} p_{\texttt{TAR}\vert \textrm{vir}}} \Bigg[(\bm{\overline N_{Z_C=0}^{\rm det}} + \bm{\Delta_A}) \,   G_+ \bigg(\frac{\bm{\overline N_{Z_C=1}^{\rm {err}}} + \bm{\Delta_A}}{\bm{\underline N_{Z_C=1}^{\rm det}} - \bm{\Delta_A}},	1 - \frac{2 p_{Z_C} (\bm{\overline N_{X_C=1}} + \bm{\Delta_A})}{p_{X_C} (\bm{\underline N_{Z_C=0}^{\rm det}} + \bm{\underline N_{Z_C=1}^{\rm det}} - \bm{\Delta_A})}\bigg) \\
				&\,- \sum_{\alpha \in \{0,1\}} \sum_{j \in \mathcal{C}_{\rm Tar}^{(\alpha)}} \bm{\underline N_{j,\texttt{TAR}\alpha,Z_C}^{(\alpha \oplus 1)_X}} + \bm{\Delta_A} + \bm{\Delta_H}\Bigg],
			\end{aligned}
		\end{equation}
		where $\epsilon_{\rm tot} \coloneqq 6 \epsilon_A + (2\upsilon_{r}+3\upsilon_t+3)\epsilon_C + \epsilon_H$ and $p_{\texttt{TAR}\vert \textrm{vir}} > 0$. However, if we apply the same steps starting from our improved bound in \cref{eq:new_formula}, we obtain the tighter bound
		\begin{equation}
			\label{eq:Nph_bound_finite_new}
			\begin{aligned}
				\bm{\nph} &\underset{\epsilon'_{\rm tot}}{\leq} \frac{1}{p_{Z_C} p_{\texttt{TAR}\vert \textrm{vir}}} \Bigg[(\bm{\overline N_{Z_C=0}^{\rm det}}  + \bm{\Delta_A} ) \,   G_+ \bigg(\frac{\bm{\overline N_{Z_C=1}^{\rm {err}}} + \bm{\Delta_A} }{\bm{\underline N_{Z_C=1}^{\rm det}} - \bm{\Delta_A} },	1 - \frac{2 p_{Z_C} (\bm{\overline N_{X_C=1}} + \bm{\Delta_A} )}{p_{X_C} (\bm{\overline N_{Z_C=0}^{\rm det}} + \bm{\Delta_A}  +  \max(\bm{\underline N_{Z_C=1}^{\rm det}} - \bm{\Delta_A} ,0))}\bigg) \\
				&\,- \sum_{\alpha \in \{0,1\}} \sum_{j \in \mathcal{C}_{\rm Tar}^{(\alpha)}} \bm{\underline N_{j,\texttt{TAR}\alpha,Z_C}^{(\alpha \oplus 1)_X}} + \bm{\Delta_A}  + \bm{\Delta_H} \Bigg],
			\end{aligned}
		\end{equation}
		where $\epsilon'_{\rm tot} \coloneqq 6 \epsilon_A + (2\upsilon_{r}+2\upsilon_t+3)\epsilon_C + \epsilon_H$. Note that $\epsilon'_{\rm tot} < \epsilon_{\rm tot}$, which arises from the fact that $\bm{\underline N_{Z_C=0}^{\rm det}}$ no longer appears in \cref{eq:Nph_bound_finite_new}, and therefore the concentration inequalities used to bound this term in \cite{curras-lorenzoSecurityHighspeed2025} no longer need to be used or included in the failure probability. Again, the RHS of \cref{eq:Nph_bound_finite_old,eq:Nph_bound_finite_new} are only well-defined when $\bm{\underline N_{Z_C=1}^{\rm det}} - \bm{\Delta_A} \geq 0$; otherwise, we can replace the RHS by $\bm{n_K}$, since the trivial bound $\bm{\nph} \leq \bm{n_K}$ holds deterministically.
		
		All RVs on the RHS of Eq.~\eqref{eq:Nph_bound_finite_new} are functions of the data observed and announced in the actual protocol. We want to prove that the RHS of Eq.~\eqref{eq:Nph_bound_finite_new} is increasing with respect to the number of key rounds $\bm{n_K}$ (denoted by $N_{\rm key}^{\rm det}$ in \cite{curras-lorenzoSecurityHighspeed2025}). For this, we note that the only terms that depend on $\bm{n_K}$ are $\bm{N_{Z_C=0}^{\rm det}}$, $\bm{\Delta_A}$, and $\bm{\Delta_H}$, all of which are non-decreasing with respect to $\bm{n_K}$. All other RVs depend only on the data from rounds in which Bob used the $\xpovm$ POVM, which we denote by the random vector $\bm{\vecnX}$. To indicate the RVs that depend on $\bm{n_K}$, we write them as explicit functions of $\bm{n_K}$:
		\begin{equation}
			\label{eq:Nph_bound_finite_new_dep}
			\begin{aligned}
				\bm{\nph} &\underset{\epsilon'_{\rm tot}}{\leq} \frac{1}{p_{Z_C} p_{\texttt{TAR}\vert \textrm{vir}}} \Bigg[\big(\bm{\overline N_{Z_C=0}^{\rm det}} (\bm{n_K}) + \bm{\Delta_A} (\bm{n_K})\big) \,   G_+ \Bigg(\frac{\bm{\overline N_{Z_C=1}^{\rm {err}}} + \bm{\Delta_A} (\bm{n_K})}{\bm{\underline N_{Z_C=1}^{\rm det}} - \bm{\Delta_A} (\bm{n_K})}, \\
				&~ 1 - \frac{2 p_{Z_C} \big(\bm{\overline N_{X_C=1}} + \bm{\Delta_A} (\bm{n_K})\big)}{p_{X_C} \Big(\bm{\overline N_{Z_C=0}^{\rm det}}(\bm{n_K}) + \bm{\Delta_A} (\bm{n_K}) +  \max\big(\bm{\underline N_{Z_C=1}^{\rm det}} - \bm{\Delta_A} (\bm{n_K}),0\big)\Big)}\Bigg) \\
				&~- \sum_{\alpha \in \{0,1\}} \sum_{j \in \mathcal{C}_{\rm Tar}^{(\alpha)}} \bm{\underline N_{j,\texttt{TAR}\alpha,Z_C}^{(\alpha \oplus 1)_X}} + \bm{\Delta_A} (\bm{n_K}) + \bm{\Delta_H}(\bm{n_K}) \Bigg].
			\end{aligned}
		\end{equation}
		
		To analyze the monotonicity systematically, we define the auxiliary functions
		\begin{equation}
			g(x,z) = x \,   G_+ \bigg(\frac{\bm{\overline{N}_{Z_C=1}^{\rm {err}}} + \bm{\Delta_A}(z)}{\bm{\underline{N}_{Z_C=1}^{\rm det}} - \bm{\Delta_A}(z)},	1 - \frac{2 p_{Z_C} (\bm{\overline{N}_{X_C=1}} + \bm{\Delta_A}(z))}{p_{X_C} (x + \max(0,\bm{\underline{N}_{Z_C=1}^{\rm det}} - \bm{\Delta_A}(z)))}\bigg) + \bm{\Delta_A}(z) +  \bm{\Delta_H}(z)
		\end{equation}
		and
		\begin{equation}
			h(z) = \bm{\overline{N}_{Z_C=0}^{\rm det}}(z) + \bm{\Delta_A} (z).
		\end{equation}

		Using these definitions, we can rewrite the bound in Eq.~\eqref{eq:Nph_bound_finite_new_dep} as
		\begin{equation}
			\label{eq:Nph_bound_finite_new_dep_rewritten}
			\begin{aligned}
				\bm{\nph} &\underset{\epsilon'_{\rm tot}}{\leq} \frac{1}{p_{Z_C} p_{\texttt{TAR}\vert \textrm{vir}}} \Bigg[g(h(\bm{n_K}), \bm{n_K}) - \sum_{\alpha \in \{0,1\}} \sum_{j \in \mathcal{C}_{\rm Tar}^{(\alpha)}} \bm{\underline N_{j,\texttt{TAR}\alpha,Z_C}^{(\alpha \oplus 1)_X}}\Bigg].
			\end{aligned}
		\end{equation}
		Clearly, if we prove that $g(h(\bm{n_K}), \bm{n_K})$ is non-decreasing with respect to $\bm{n_K}$, then it follows that the RHS of Eq.~\eqref{eq:Nph_bound_finite_new_dep_rewritten} also satisfies this property. To prove this, it suffices to show that:
		\begin{itemize}
			\item $g(x,z)$ is non-decreasing with respect to both $x$ and $z$, and
			\item $h(z)$ is non-decreasing with respect to $z$.
		\end{itemize}
		
		The latter follows directly from the fact that both $\bm{\overline{N}_{Z_C=0}^{\rm det}}(\bm{n_K})$ and $\bm{\Delta_A} (\bm{n_K})$ are non-decreasing with respect to $\bm{n_K}$. Moreover, $g(x,z)$ is non-decreasing with $z$ due to the fact that $\bm{\Delta_A} (\bm{n_K})$ and $\bm{\Delta_H} (\bm{n_K})$ are non-decreasing with $\bm{n_K}$, and the fact that the function $G_+(y,z)$ is non-decreasing with $y$ and non-increasing with $z$.
		
		Thus, it only remains to prove that $g(x,z)$ is non-decreasing with $x$. Note that we can write
		\begin{equation}
			g(x,z) = f(x; y, a, c, p) + \bm{\Delta_A}(z) +  \bm{\Delta_H}(z)
		\end{equation}
		by identifying
		\begin{equation}
			y   = \tfrac{\bm{\overline{N}_{Z_C=1}^{\rm {err}}} + \bm{\Delta_A}(z)}{\bm{\underline{N}_{Z_C=1}^{\rm det}} - \bm{\Delta_A}(z)},\qquad
			a   = 2p_{Z_C}(\bm{\overline{N}_{X_C=1}} + \bm{\Delta_A}(z)),\qquad
			c   = \max(0,\bm{\underline{N}_{Z_C=1}^{\rm det}} - \bm{\Delta_A}(z)),\qquad
			p   = p_{X_C}.
		\end{equation}
		
		Therefore, the fact that $g(x,z)$ is non-decreasing with $x$ follows from Lemma~\ref{lem:monotonicity}. Note that the monotonicity proof requires that $y \geq 0$, which corresponds to the condition $\bm{\underline{N}_{Z_C=1}^{\rm det}} - \bm{\Delta_A}(z) > 0$ (since the numerator $\bm{\overline{N}_{Z_C=1}^{\rm {err}}} + \bm{\Delta_A}(z) \geq 0$ by definition). When this condition holds, we can apply Lemma~\ref{lem:monotonicity} to establish that $g(x,z)$ is non-decreasing with respect to $x$, and hence the overall bound is non-decreasing with respect to $\bm{n_K}$. 
		
		However, if $\bm{\underline{N}_{Z_C=1}^{\rm det}} - \bm{\Delta_A}(z) \leq 0$, then the bounds in Eqs.~\eqref{eq:Nph_bound_finite_new} and \eqref{eq:Nph_bound_finite_new_dep} are not well-defined. In this case, as explained earlier, we can resort to the trivial bound $\bm{\nph} \underset{\epsilon'_{\rm tot}}{\leq} \bm{n_K}$, which is obviously non-decreasing with respect to $\bm{n_K}$. Since this bound holds deterministically, it certainly holds with failure probability $\epsilon'_{\rm tot}$. 
		
		Therefore, by identifying $\epssrc = (\epsilon'_{\rm tot})^2$, we have proven that the phase-error bound can be written as in \cref{eq:nph_bound}, where $\Findep$ is non-decreasing with respect to the number of key rounds $\bm{n_K}$, completing the proof.
	
	\subsection{Proof of \cref{lem:monotonicity}}
	\label{subapp:proof_monotonicity}
	
	\begin{lemma}
		\label{lem:monotonicity}
		Consider the function $f(x) \equiv f(x;y,a,c,p)$ for any fixed value of $(y,a,c,p)$ such that $(0,y,a,c,p) \in \mathcal{D}$, where $f(x;y,a,c,p)$ is defined in \cref{eq:def_f}. The function $f(x)$ is non-decreasing in $x$ for $x \geq 0$.
	\end{lemma}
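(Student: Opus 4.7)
The plan is to handle the two branches of $G_+$ separately and verify continuity at the boundary. On the ``otherwise'' branch, $f(x) = x$ is trivially non-decreasing, so the real work is confined to the nontrivial regime $0 \leq y < z(x)^2 \leq 1$ with $z(x) > 0$, where $z(x) = 1 - a/(p(x+c))$. Since $z(x)$ is non-decreasing in $x$, this regime occupies a single right half-line $x > x^\star$; at the boundary $z(x^\star) = \sqrt{y}$ the nontrivial formula also evaluates to $G_+ = 1$, so $f$ is continuous and the two pieces agree at value $x^\star$, reducing the task to showing monotonicity on each piece.

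The key reformulation is a trigonometric rewriting of $G_+$. Setting $\sin\alpha = \sqrt{y}$ with $\alpha \in [0, \pi/2)$ and $\sin\beta(x) = \sqrt{1-z(x)^2}$ with $\beta(x) \in [0, \pi/2)$, direct expansion of $\sin^2(\alpha+\beta)$ via $\sin(\alpha+\beta) = \sin\alpha\cos\beta + \cos\alpha\sin\beta$ yields the identity
\begin{equation*}
	G_+(y, z(x)) = \sin^2(\alpha + \beta(x)),
\end{equation*}
valid throughout the nontrivial regime (where $\alpha + \beta(x) < \pi/2$). Consequently $f(x) = x\sin^2\theta(x)$ with $\theta(x) := \alpha + \beta(x)$, and $f'(x) = \sin\theta\,[\sin\theta + 2x\cos\theta\,\theta'(x)]$. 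Since $\sin\theta > 0$ in this regime, monotonicity reduces to showing $\tan\theta + 2x\,\theta' \geq 0$.

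A short computation from $\sin^2\beta = u(2-u)$ with $u(x) = a/(p(x+c))$ gives $|\beta'(x)| = |u'(x)|/\sin\beta(x) = a/(p(x+c)^2\sin\beta(x))$, so (using $\theta' = \beta'$) the sought inequality becomes
\begin{equation*}
	\tan\theta(x)\cdot\sin\beta(x) \,\geq\, \frac{2ax}{p(x+c)^2} \,=\, 2u(x)\cdot\frac{x}{x+c}.
\end{equation*}
Since $x/(x+c) \leq 1$, it suffices to prove the stronger but $x$-free statement $\tan\theta\cdot\sin\beta \geq 2u$. Applying the half-angle identities $\sin\beta = 2\sin(\beta/2)\cos(\beta/2)$ and $u = 1-\cos\beta = 2\sin^2(\beta/2)$ collapses this to $\tan(\alpha+\beta) \geq 2\tan(\beta/2)$.

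To close, monotonicity of $\tan$ on $[0, \pi/2)$ together with $\alpha \geq 0$ gives $\tan(\alpha+\beta) \geq \tan\beta$, and the double-angle formula $\tan\beta = 2\tan(\beta/2)/(1 - \tan^2(\beta/2))$ combined with $\tan^2(\beta/2) \in [0, 1)$ (since $\beta/2 < \pi/4$) yields $\tan\beta \geq 2\tan(\beta/2)$, completing the argument. The main obstacle I anticipate is finding the right parametrization: a brute-force differentiation of $f$ produces an unwieldy radical-rational expression whose sign is not manifest, while the identification $G_+(y,z) = \sin^2(\alpha+\beta)$ converts the problem into a clean tangent inequality that collapses via an elementary half-angle identity; handling the edge case $c=0$ (where $x/(x+c) = 1$ so no slack) confirms that the $x$-free reduction is essentially tight.
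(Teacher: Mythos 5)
Your proof is correct, and it takes a genuinely different route from the paper's on the one step that actually matters. The paper's proof has the same skeleton (continuity at the branch boundaries of $G_+$, plus non-negativity of $f'$ on each open piece), but for the main branch it delegates the sign of $f'$ to a symbolic \texttt{Reduce} call in \texttt{Mathematica}, which certifies $f'\geq 0$ without explaining it. You instead observe the identity $G_+(y,z)=\sin^2(\alpha+\beta)$ with $\sin\alpha=\sqrt{y}$, $\cos\beta=z$ (the branch condition $y<z^2$ being exactly $\alpha+\beta<\pi/2$, and the boundary value $G_+=1$ being $\sin^2(\pi/2)$), which turns the unwieldy radical derivative into $f'=\sin\theta\,(\sin\theta+2x\cos\theta\,\theta')$ and reduces everything, after the harmless relaxation $x/(x+c)\leq 1$, to $\tan(\alpha+\beta)\geq 2\tan(\beta/2)$ --- which follows from $\tan(\alpha+\beta)\geq\tan\beta=2\tan(\beta/2)/(1-\tan^2(\beta/2))\geq 2\tan(\beta/2)$ for $\beta/2<\pi/4$. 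I checked the algebra: the identity $G_+=\sin^2(\alpha+\beta)$ holds ($yz^2+(1-y)(1-z^2)=y+(1-z^2)(1-2y)$ and the cross terms agree for $z>0$), the computation $|\beta'|=|u'|/\sin\beta$ with $u=1-z$ is right, and the half-angle collapse $u=2\sin^2(\beta/2)$, $\sin\beta=2\sin(\beta/2)\cos(\beta/2)$ is correct. Your approach buys a self-contained, human-verifiable argument (and makes transparent \emph{why} the bound is monotone, with the $c=0$ case showing where it is tight), at the cost of requiring the right parametrization; the paper's buys brevity at the cost of relying on a CAS. The only points you gloss over are trivial degeneracies --- $\sin\theta=0$ forces $a=0$, $y=0$, where $f\equiv 0$, and $x=0$ with $c=0$, where $f(0)=0$ --- neither of which affects the conclusion.
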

	
	\begin{proof}
		Let $ z(x) = 1 - \frac{a}{p(x+c)}$, such that $f(x) = G_+(y,z(x))$. We establish that $f(x)$ is non-decreasing in $x$ by proving (i) continuity of $f(x)$ for all $x \geq 0$, and (ii) non-negativity of $f'(x)$ wherever the derivative exists. 
			
			\subsubsection*{1: Continuity of $f(x)$ for $x \geq 0$.}
			
			The function $f$ is clearly continuous everywhere except possibly at points where the definition of $G_+(y,z(x))$ changes. These occur when $z(x)$ crosses the boundaries of the region $\{0 \leq y < z^2 \leq 1 \text{ and } z > 0\}$. Note that $f(0) = 0$ by direct evaluation, so we consider only $x > 0$.
			
			\begin{enumerate}
				\item \textit{Boundary $z(x_b) = 0$.} Let $x_b > 0$ satisfy $z(x_b) = 0$.
				\begin{itemize}
					\item As $x \to x_b^-$: We have $z(x) < 0$, so we are in the ``otherwise'' branch where $G_+(y,z(x)) = 1$. Thus $f(x) = x$ and $\lim_{x \to x_b^-} f(x) = x_b$.
					
					\item At $x = x_b$: Since $z(x_b) = 0$ fails the condition $z > 0$, we have $G_+(y,z(x_b)) = 1$ and $f(x_b) = x_b$.
					
					\item As $x \to x_b^+$: We have $z(x) \to 0^+$. For any $y > 0$, there exists a right-neighborhood of $x_b$ where $z(x)^2 < y$, and therefore $G_+(y,z(x)) = 1$ and $f(x) = x$. Hence $\lim_{x \to x_b^+} f(x) = x_b$. (For $y = 0$, the main branch gives $G_+(0,z) = 1 - z^2 \to 1$ as $z \to 0^+$, maintaining continuity.)
				\end{itemize}
				
				\item \textit{Boundary $z(x_b)^2 = y$.} Let $x_b > 0$ satisfy $z(x_b) = \sqrt{y}$ for $y \in (0,1)$.
				\begin{itemize}
					\item As $x \to x_b^-$: We have $z(x)^2 < y$, so $G_+(y,z(x)) = 1$ and $f(x) = x$. Thus $\lim_{x \to x_b^-} f(x) = x_b$.
					
					\item As $x \to x_b^+$: We have $z(x) > \sqrt{y}$, entering the main branch. Computing the limit:
					\begin{align}
						\lim_{z \to \sqrt{y}^+} G_+(y,z) &= y + (1-y)(1-2y) + 2\sqrt{y(1-y) \cdot y(1-y)} \notag\\
						&= y + (1-y)(1-2y) + 2y(1-y) \notag\\
						&= y + 1 - 3y + 2y^2 + 2y - 2y^2 = 1.
					\end{align}
					Therefore, $\lim_{x \to x_b^+} f(x) = x_b$.
				\end{itemize}
				
				\item \textit{Boundary $z(x_b)^2 = 1$.} This requires $z(x_b) = 1$, which occurs only when $a = 0$. In this case, $z(x) = 1$ for all $x$, obtaining $f(x) = xy$ (if $y < 1$) or $f(x) = x$ (if $y = 1$), both continuous.
			\end{enumerate}
			
			Since $f(x)$ is continuous at all transition points, it is continuous on $[0,\infty)$.
			
			\subsubsection*{2: Non-negativity of $f'(x)$.}
			
			We examine the derivative on open intervals where it exists:
		
		\begin{enumerate}
			\item \textit{Case 1: $G_+(y,z(x)) = 1$.} This occurs when $z(x) \leq 0$ or when $0 < z(x) \leq 1$ with $z(x)^2 \leq y$. Here $f(x) = x$, so $f'(x) = 1 > 0$.
			
			\item \textit{Case 2: $G_+(y,z(x))$ follows the main branch.} This occurs when $0 \leq y < z(x)^2 \leq 1$ and $z(x) > 0$. The derivative is
			\begin{equation}
				f'(x) = G_+(y,z(x)) + x \frac{\partial G_+}{\partial z}(y,z(x)) \cdot z'(x).
			\end{equation}
			A symbolic check in \texttt{Mathematica} under $(x,y,a,c,p)\in\mathcal D$ and $z(x) > 0$ confirms that this derivative is never negative:
			\begin{verbatim}
				In[1]:= 
				Gplus[y_, z_] := y + (1 - z^2)*(1 - 2 y) + 2 Sqrt[z^2*(1 - z^2)*y*(1 - y)];
				In[2]:= z[x_] := 1 - a/(p*x + c);
				In[3]:= f[x_] := x*Gplus[y, z[x]];
				In[4]:= df = Simplify[D[f[x], x]];
				In[5]:= Reduce[{df < 0, x >= 0, 0 <= y <= 1, c >= 0, 0 <= p <= 1, 
					a >= 0, z[x] > 0}]
				Out[5]= False
			\end{verbatim}
		\end{enumerate}
		
		\noindent Since $f(x)$ is continuous on $[0,\infty)$ with $f'(x) \geq 0$ wherever the derivative exists, it follows that $f(x)$ is non-decreasing in $x$.
	\end{proof}
	
\end{widetext}

\bibliography{refs,refs_extra}

\clearpage

\end{document}